\setlist[itemize]{leftmargin=*}
\newtheorem{theorem}{Theorem}[section]
\newtheorem{corollary}{Corollary}[theorem]
\newtheorem{lemma}[theorem]{Lemma}
\theoremstyle{definition}
\newtheorem{definition}{Definition}[section]
\begin{document}
\begin{figure}[t]
\includegraphics[scale = 1.5]{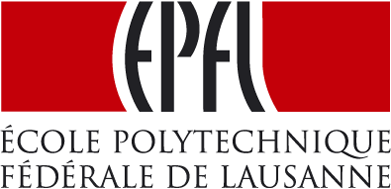}
\centering
\end{figure}

\title{\textbf{Achieving Competitiveness in Online Problems}}
\author{Mustafa Safa Ozdayi \\ Supervised by: Ola Svensson and Sangxia Huang}
\date{13 January 2017}
\maketitle

\begin{abstract}
In the setting of online algorithms, the input is initially not present but rather arrive one-by-one over time and after each input, the algorithm has to make a decision.
Depending on the formulation of the problem, the algorithm might be allowed to change its previous decisions or not at a later time.
We analyze two problems to show that it is possible for an online algorithm to become more competitive by changing its former decisions.
We first consider the online edge orientation in which the edges arrive one-by-one to an empty graph and the aim is to orient them in a way
such that the maximum in-degree is minimized.
We then consider the online bipartite $b$-matching.
In this problem, we are given a bipartite graph where one side of the graph is initially present and where the other side arrive online.
The goal is to maintain a matching set such that the maximum degree in the set is minimized.
For both of the problems, the best achievable competitive ratio is $\Theta(\log n)$ over $n$ input arrivals when decisions are irreversible.
We study three algorithms for these problems, two for the former and one for the latter, that achieve $O(1)$ competitive ratio by changing $O(n)$ of their decisions over
$n$ arrivals.
In addition to that, we analyze one of the algorithms, the shortest path algorithm, against an adversary.
Through that, we prove some new results about algorithms performance.
\end{abstract}
\newpage

\section{Introduction}
\label{sec:introduction}
In this paper, we study the online algorithms.
Online algorithms get their input one-by-one during their computation unlike their offline counterparts which are given the whole input before their computation.
\textit{Competitive analysis} is concerned with how an online algorithm performs with respect to its offline counterpart. \textit{Competitive ratio} is the ratio between the
performance of an online algorithm and its offline counterpart. An online algorithm is called \textit{competitive} if its competitive ratio is bounded by some constant. More formally,
we call an online algorithm $c \textit{-competitive}$ if there exists an arbitrary constant $a$ such that for all finite input sequences $I$, we have $$ C(ALG(I)) \leq c \cdot C(OPT(I)) + a, $$ where $C(OPT(\cdot))$ is the cost
of the optimal offline algorithm and $C(ALG(\cdot))$ is the cost of the online algorithm. We note that an optimal offline algorithm knows the
entire sequence in advance and can process it with minimum possible cost.

An important concept about online algorithms is \textit{recourse}, i.e. being able to change past decisions.
Indeed, not being able to see future inputs can be problematic for an online algorithm. A single new input might be inconsistent with the solution that the algorithm has built until that point and can increase
the competitive ratio greatly. The main idea is, by allowing the algorithm to change some of its past decision, to make the solution more compatible with arriving inputs and thus, to lower the competitive ratio.

Certainly, for an online algorithm there is a tradeoff between the number of allowed recourse operations and its competitive ratio. If an arbitrary number of recourse operations are allowed, then the online algorithm can
simply simulate the offline optimal algorithm and achieve $1$-competitiveness. All it has to do is to make some arbitrary decisions as the inputs arrive and to run the offline optimal algorithm
once it has all the inputs. Due to that, we want to keep the number of recourse operations bounded from above to maintain the online nature of online algorithms.

The rest of the paper is organized as follows: in Section 2, we present the online edge orientation and two different algorithms, the shortest path algorithm
and the all-flip algorithm, that achieve constant competitive ratio under certain assumptions. In Section 3, we establish some results regarding the shortest path algorithm through analyzing
it against an adversary. In Section 4, we present the online bipartite $b$-matching and an algorithm that achieves a constant competitive ratio for it.
In the last section, we give some concluding remarks.
\newpage

\section{Online Edge Orientation}
In the online edge orientation, we are given a set of nodes and a set of undirected edges. The edges are not initially present but rather, they are added one-by-one.
When an edge is added, we must orient it towards one of its endpoints. The objective is to orient the edges in a way such that the maximum in-degree of the graph is minimized after $n$ edge additions.
More precisely, the cost of the problem is defined as the maximum in-degree in the graph after $n$ arrivals.
Note that $n$ is the number of edges in the graph after $n$ edge additions and  there is no upper bound on the number of nodes.
For this problem, a recourse operation is simply reorienting an edge (which we also call as flip sometimes) after it was given an initial orientation at its addition.

The first algorithm we present for this problem assumes the set of arriving edges are acyclic. Before presenting the algorithm, we present a lemma that gives a tight bound on the competitive
ratio of any algorithm under that assumption when we do not allow recourse.

\begin{lemma}\label{lwbnd}
If the given edge set is acyclic and if the reorientation of the edges are not allowed, the best competitive ratio that can be achieved for the online edge orientation is $\Theta(\log n)$.
\end{lemma}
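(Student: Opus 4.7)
The plan is to prove matching $O(\log n)$ and $\Omega(\log n)$ bounds. The starting point is that for any forest the offline optimum equals $1$: rooting each tree and orienting every edge away from its root yields maximum in-degree $1$. Hence $\mathrm{OPT}=1$ for any acyclic input, and a competitive ratio of $\Theta(\log n)$ is equivalent to the best achievable maximum in-degree being $\Theta(\log n)$.

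For the upper bound I would consider the following natural algorithm: since every arriving edge in a forest joins two previously disconnected components, orient it toward the endpoint lying in the smaller of the two components (ties broken arbitrarily). To bound the in-degree of any node $v$, I would use a potential-doubling argument: each time an edge is oriented into $v$, the component of $v$ was the smaller side of the merge, so the resulting component has size at least twice that of $v$'s previous component. An easy induction then gives that once $v$'s in-degree reaches $d$, its component contains at least $2^d$ nodes, forcing $d \leq \log_2(n+1) = O(\log n)$.

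For the lower bound I would design an adaptive adversary that forces in-degree $\Omega(\log n)$ against \emph{any} online algorithm. The adversary starts from a pool of isolated ``candidates'' whose size is the next power of two above $n$. In each round it pairs the surviving candidates, presents the resulting matching edge by edge, and after the algorithm orients each edge keeps only the head (the endpoint whose in-degree increased) as a candidate for the next round. I would prove by induction on the round number $k$ that at the start of round $k$ every surviving candidate has in-degree exactly $k-1$ and roots a previously built subtree on exactly $2^{k-1}$ nodes, with all such subtrees pairwise disjoint. This invariant immediately gives two things: every newly presented edge joins two disjoint subtrees so the running graph is always a forest, and after $\lceil \log_2(n+1) \rceil$ rounds some node must have in-degree at least $\log_2 n$. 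Combined with $\mathrm{OPT}=1$ this yields the $\Omega(\log n)$ lower bound.

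The main obstacle I anticipate is setting up and preserving the multi-part invariant in the lower bound: within a single induction step one has to check simultaneously that the candidates are the right count, carry the right in-degree, and sit in pairwise disjoint subtrees of the right size, so that both endpoints of every newly presented matching edge have equal in-degree $k-1$ and one of them is necessarily pushed to $k$. Once this invariant is locked down the remaining analysis is bookkeeping, and the upper bound is comparatively clean because the doubling potential only increases and is insensitive to adversarial choices.
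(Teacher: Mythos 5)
Your proof is correct, and its overall skeleton matches the paper's: both reduce the claim to showing the best achievable maximum in-degree is $\Theta(\log n)$ via $\mathrm{OPT}=1$ on forests, and both lower bounds use the same adaptive pairing adversary (the paper phrases it as rounds of $n/2, n/4, \dots$ edges between the current in-degree-$k$ nodes; your candidate/invariant formulation is the same construction stated more carefully). Where you genuinely diverge is the upper bound: the paper analyzes the greedy rule ``orient toward the endpoint of smaller \emph{in-degree}'' and proves by induction over adversary sequences that reaching in-degree $k$ requires at least $2^k-1$ edges (two disjoint in-degree-$(k-1)$ gadgets plus one edge), whereas you analyze ``orient toward the endpoint in the smaller \emph{component}'' and maintain the local invariant that a node of in-degree $d$ sits in a component of at least $2^d$ nodes. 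Both rules give $O(\log n)$; your component-doubling invariant is arguably cleaner since it is a per-node statement that needs no global counting over sequences, while the paper's rule has the advantage of not requiring the algorithm to track component sizes. One small correction to your lower bound: the candidate pool should be the \emph{largest} power of two at most $n+1$ (padding any leftover budget with isolated edges), not the next power of two above $n$ --- a pool of $N$ candidates consumes $N-1$ edges over $\log_2 N$ rounds, so rounding up can overshoot the budget of $n$ edges by nearly a factor of two. This only costs an additive $1$ in the final in-degree and does not affect the $\Omega(\log n)$ conclusion.
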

\begin{proof}
We first show the lower bound by constructing a sequence of the edges. First assume that $n \neq 1$ and divisions below yield an integer value.
We keep track of the maximum in-degree round by round. In the first round, we put n/2 edges (that arrive one-by-one) such that none of them are incident.
This will create n/2 components that consist of an edge and two nodes in which one has in-degree 0 and the other has in-degree 1.
In the second round, we put n/4 edges between the nodes that have in-degree $1$. After this round we have n/4 components that
consist of two edges and four nodes such that two has in-degree 0, one has in-degree 1 and one has in-degree 2.
If we continue to keep putting edges in this fashion, we see that maximum in-degree of
the graph increases by 1 after each round. The number of rounds is simply the number of times we can halve $n$ which is $\log_2 n$.
If dividing $n$ by powers of $2$ does not yield an integer, we take ceil of the divisions if $n = 2^k - 1$ for some
integer $k \geq 2$ and we take floor of them otherwise. Then, the number of rounds and consequently
the maximum in-degree is $\lceil \log_2 n \rceil$ if $n = 2^k - 1$ or $\lfloor \log_2 n \rfloor$ otherwise. Finally for $n = 1$, the
maximum in-degree is simply $1$.

For the upper bound, consider the simple algorithm that orients edges towards the node with smaller in-degree.
Now observe that forcing the maximum in-degree to increase by 1 requires connecting two nodes that has maximum in-degree.
For any other case, the algorithm orients the edge towards the node with smaller in-degree and avoids increasing the maximum in-degree.
Then, it is easy to prove any sequence that achieves $k$ in-degree requires at least $2^k - 1$ edges by induction.
Base case $k = 1$ trivially holds. Assume any sequence that achieves $k-1$ in-degree requires at least $2^{k-1} -1$ edges. Then we can construct a sequence
to achieve $k$ in-degree with two $k-1$ achieving sequences and a single edge. Note that $k-1$ achieving sequences must be defined on distinct set of
nodes in order to ensure to have two distinct nodes with $k-1$ in-degree.
Thus, we need at least $2\cdot(2^{k-1} -1) + 1 = 2^k -1$ edges.
Of this we conclude that any sequence that achieves $\log_2 n$ in-degree requires at least $2^{\log_2 n} - 1 = n-1$ edges. Since the last remaining
edge can not possibly increase the maximum in-degree by itself, this is the upper bound.

Finally, remember that our edge sequence is acyclic. This means they form a forest.
Further, it implies there exists an orientation of the edges such that the maximum in-degree is 1. One can find such orientation by arbitrarily picking a node as the root and orienting all the edges
away from it in every tree in the forest. Hence, the competitive ratio is $\Theta (\log n)$.
\end{proof}

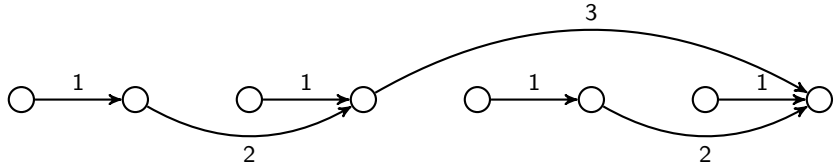
\begin{figure}[ht] \label{fig:lowerBound}
  \centering
\begin{tikzpicture}[->,>=stealth',auto,node distance=1.5cm,
  thick,main node/.style={circle,draw,font=\sffamily\bfseries}]

  \node[main node] (1) {};
  \node[main node] (2) [right of=1] {};
  \node[main node] (3) [right of=2] {};
  \node[main node] (4) [right of=3] {};
  \node[main node] (5) [right of=4] {};
  \node[main node] (6) [right of=5] {};
  \node[main node] (7) [right of=6] {};
  \node[main node] (8) [right of=7] {};

  \path[every node/.style={font=\sffamily\small}]
    (1) edge node [right][above] {1} (2)
    (3) edge node [right][above] {1} (4)
    (5) edge node [right][above] {1} (6)
    (7) edge node [right][above] {1} (8)
    (2) edge[bend right] node [left][below] {2} (4)
    (6) edge[bend right] node [left][below] {2} (8)
    (4) edge[bend left] node [left][above] {3} (8);
\end{tikzpicture}
\caption{Construction for the lower bound with $n = 8$ where the last edge is redundant.  Numbers on the edges indicate their addition round. As can be seen, maximum in-degree is $\log_2 8 = 3$.}
\end{figure}

We have showed the competitive ratio for this problem is $\Theta(\log n)$ when there are no reorientations.
We now present an algorithm that achieves $O(1)$ competitive ratio by doing $O(n)$ reorientations over $n$ arrivals.

\subsection{Shortest Path Algorithm}
We remind that we assume the set of arriving edges are acyclic. The algorithm given in \cite{Gupta:2014:MAO:2634074.2634109} works in a greedy
fashion to achieve $O(1)$ competitiveness under that assumption. Basically, the algorithm maintains a set of nodes such that no node has more than $c$ in-degree
over $n$ edge additions for some constant $c \geq 2$ which is referred as the in-degree constraint of the graph. Hence, it is $c$-competitive.
We note that achieving $1$-competitiveness is possible under the given edge set however,
it is known that there exists instances which might require any algorithm to do $\Omega(n \log n)$ flips in that case (\cite{Grove:1995:OPM:645930.758390}).
So by relaxing the constraint, authors try to approximate the optimal solution within a constant degree by doing fewer number of flips.
For the rest of the section we let the in-degree constraint be $c=2$ and call a node \textit{saturated} if it has 2 in-degree.
We now present a simple lemma which is rather an observation about our graph.

\begin{lemma}
Let $u$ be a node in the forest. We denote the tree that $u$ belongs to by $T_u$ and the number of nodes it has by $|T_u|$. Then, there exists a node $u'$ in $T_u$ such that it is unsaturated and there is a path
from $u'$ to $u$, i.e. $P_{u' \rightarrow u}$. We denote the length of that path by $|P_{u' \rightarrow u}|$.
\end{lemma}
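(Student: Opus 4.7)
My plan is to prove this by a simple extremal/maximality argument: take the longest directed path in $T_u$ that ends at $u$, and show that its starting vertex must be unsaturated (in fact, must have in-degree $0$).

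More precisely, I would first fix the tree $T_u$, which by hypothesis is acyclic and whose edges carry orientations satisfying the in-degree constraint $c=2$. Among all directed paths in $T_u$ ending at $u$, let $P: u' = w_0 \to w_1 \to \cdots \to w_k = u$ be one of maximum length (such a path exists, since the trivial path of length $0$ consisting of just $u$ is a candidate, and the tree is finite). The claim is that $u'$ is unsaturated. I would argue this by contradiction: suppose $u'$ has in-degree at least $1$, and let $v$ be an in-neighbor of $u'$, so that the edge $v \to u'$ is oriented into $u'$.

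The key observation is then that $v$ cannot lie on $P$. Indeed, $P$ is a path in the underlying undirected tree, hence a simple path; the only neighbor of $u'$ that appears on $P$ is $w_1$, and the edge $u' \to w_1$ is oriented out of $u'$, whereas $v$ is an in-neighbor of $u'$. So $v \neq w_1$, and since the tree has no cycles, $v$ cannot appear elsewhere on $P$ without creating a cycle. Consequently, $v \to u' \to w_1 \to \cdots \to w_k = u$ is a strictly longer directed path in $T_u$ ending at $u$, contradicting the maximality of $P$. Therefore $u'$ has in-degree $0$, which certainly makes it unsaturated, and $P_{u' \to u} = P$ is the required path.

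I do not expect any genuine obstacle here; the statement is essentially a structural observation and the proof is a one-line maximality argument. The only point that needs a moment of care is ensuring that the in-neighbor $v$ used to extend $P$ is not already on $P$, which is exactly where acyclicity of the underlying graph is used. If one preferred, an equivalent constructive proof would start at $u$ and repeatedly walk to an in-neighbor, terminating the first time an unsaturated node is reached; termination again relies on the tree being acyclic, which prevents the backwards walk from ever revisiting a node.
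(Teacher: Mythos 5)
Your proof is correct. The paper takes a slightly different route: it first gives a counting argument ($T_u$ has $|T_u|-1$ edges, while saturating every node would require $2|T_u|$ edges, so some node of $T_u$ must be unsaturated) and then invokes a backwards traversal from $u$ along in-edges, asserting that it must eventually reach an unsaturated node. Your extremal argument --- take a longest directed path ending at $u$ and show its starting vertex has in-degree $0$ --- is essentially a cleaner formalization of that traversal: it makes explicit exactly where acyclicity is used (the in-neighbor extending the path cannot already lie on it without closing a cycle), and it dispenses with the global edge count, which in the paper's version does not by itself guarantee that an unsaturated node is reachable backwards from $u$ specifically; the traversal is what does the real work there. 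You also note the constructive backwards-walk variant at the end, which is precisely the paper's argument. Both routes yield the same conclusion, and yours even gives slightly more (the endpoint found has in-degree $0$, not merely in-degree at most $1$); yours is the more self-contained of the two.
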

\begin{proof}
Let $|T_u| = k$. Since $T_u$ is a tree, it has $k-1$ edges. To saturate all the nodes in $T_u$ we need $2k$ edges. Hence, there must be some unsaturated nodes in it.
If $u$ is not saturated, we simply have $u = u'$. If $u$ is saturated then it has $2$ edges coming to it.
By doing a tree traversal that starts at $u$, we will eventually arrive to an unsatured node as we know all the nodes
in $T_u$ can not be saturated. The node we arrive is $u'$ and we can extract the path $P_{u' \rightarrow u}$ from the traversal.
\end{proof}

As its name indicates, the algorithm handles constraint violations by flipping the edges towards the shortest of the available paths. Now
suppose some edge $(u, v)$ arrives. If both endpoints are unsaturated, the algorithm picks one randomly.
If one of them is not saturated, then it orients the edge to it. Now assume both $u$ and $v$ are saturated. Since we assumed that the set of arriving edges are acyclic, $T_u \neq T_v .$
By the lemma above, there are unsaturated nodes in $T_u$ that have a path to $u$. Let $u'$ be any of the closest among such nodes by $|P_{u' \rightarrow u}|$. Similarly let $v'$ be such node in $T_v$. It is clear that flipping
process must end in an unsaturated node. This implies the minimum number of flips we can have is the minimum of $|P_{u' u}|$ and $|P_{v'v}|$.
So the algorithm finds $u'$ and $v'$, computes $|P_{u' \rightarrow u}|$ and $|P_{v' \rightarrow v}|$, picks the path with distance $\min(|P_{u' \rightarrow u}|, |P_{v' \rightarrow v}|)$
and flips all the edges on that path until it reaches an unsaturated node which is either $u'$ or $v'$.
Then, the number of flips caused by the addition of $(u, v)$ is simply given by $\min(|P_{u' \rightarrow u}|, |P_{v' \rightarrow v}|).$

We now give an upper bound to the number of flips that $(u, v)$ can cause.
Suppose the algorithm oriented the towards $u$ and consequently, flipped all the edges on $P_{u' \rightarrow u}$.
We observe that every node on that path must be saturated except $u'$. This is simply from the definition of $u'$, i.e.  $u'$ is the closest unsaturated node to $u$.
So $u$ has two edges pointing to it, the nodes that are at the other endpoints of those two edges must have two edges pointing to them too and so on
until $u'$. Essentially, the tree $T_u$ contains a complete binary tree at least up to depth $|P_{u' \rightarrow u}| - 1$. So $|P_{u' \rightarrow u}| \leq \log_2  |T_u| $
and therefore there can be at most $\log_2 |T_u|$ flips. Consequently, we can upper bound the number of flips that any edge addition can cause by $\log_2 n$.

As last, we bound the total number of flips over $n$ edge additions. From the result above, a straightforward upper bound is $O(n \log n)$. However, by using the fact that the algorithm always
flips the shortest of the available paths, it is possible to obtain a tighter bound. The worst case is simply the case where the edges
are stacked in just two trees over $n$ additions. We write the recurrence accordingly and present its solution.

\begin{lemma}
Let $T(k)$ be an upper bound on the number of flips after $k$ edge additions. Then,
$$ T(n) \leq \max_{1 \leq n_1 < n}\{T(n_1) + T(n-n_1) + \log_2 \text{min}(n_1, n - n_1)\} $$ with base case $T(1) = 0$ solves to $T(n) \leq n - \log_2 n - 1 .$
\end{lemma}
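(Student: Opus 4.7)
The plan is to prove the stated bound by strong induction on $n$, exploiting the symmetry of the recurrence to restrict attention to $n_1 \leq n - n_1$.

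First I would verify the base case $n=1$: the claimed bound evaluates to $1 - \log_2 1 - 1 = 0$, matching $T(1) = 0$. For the inductive step, assume $T(k) \leq k - \log_2 k - 1$ for all $1 \leq k < n$ and fix any split $1 \leq n_1 < n$. Since the recurrence is symmetric in $n_1$ and $n - n_1$, I would assume without loss of generality that $n_1 \leq n - n_1$, so that $\min(n_1, n-n_1) = n_1$ and in particular $n_1 \leq n/2$.

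Next, applying the induction hypothesis to both $T(n_1)$ and $T(n-n_1)$, the right-hand side of the recurrence is bounded by
\[
(n_1 - \log_2 n_1 - 1) + (n - n_1 - \log_2(n - n_1) - 1) + \log_2 n_1 = n - 2 - \log_2(n - n_1).
\]
So it remains to check $n - 2 - \log_2(n - n_1) \leq n - \log_2 n - 1$, which rearranges to $\log_2 n - \log_2(n - n_1) \leq 1$, i.e.\ $n \leq 2(n - n_1)$, i.e.\ $n_1 \leq n/2$. This is exactly the assumption we already made, so the inductive step closes.

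The only mild obstacle is the bookkeeping at $n_1 = 1$, where $\log_2 n_1 = 0$ and the inductive hypothesis degenerates to $T(1) \leq 0$; this is consistent with the base case, so no separate argument is needed. The rest is just verifying the elementary inequality $\log_2 n - \log_2(n - n_1) \leq 1$ under $n_1 \leq n/2$, and noting that the symmetric reduction covers the case $n_1 > n - n_1$ by relabeling. Taking the maximum over all admissible $n_1$ then yields $T(n) \leq n - \log_2 n - 1$, completing the induction.
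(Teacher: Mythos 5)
Your proof is correct and follows essentially the same route as the paper: strong induction with the split bounded via the inductive hypothesis, reducing to the elementary inequality $n \leq 2(n-n_1)$. The only cosmetic difference is that you invoke symmetry to treat a single case where the paper writes out both cases $n_1 > n - n_1$ and $n_1 \leq n - n_1$ explicitly; the algebra is identical.
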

\begin{proof}
Base case trivially holds. Now assume $T(k) \leq k - \log_2 k - 1$ for $k < n$. We will show
$T(n) \leq n - \log_2 n - 1$.\\

1) If $n_1 > n - n_1$:
\begin{align*}
T(n) & \leq n_1 - \log_2 n_1 -1 + (n-n_1) - \log_2 (n-n_1) - 1 + \log_2 (n-n_1) \\
& = n - (\log_2 n_1 + 1 ) - 1 \\
& = n - \log_2{2n_1} - 1 \\
& < n - \log_2 n - 1 \text{ as } 2n_1 > n.
\end{align*}

2) If $n_1 \leq n - n_1$:
\begin{align*}
T(n) & \leq n_1 - \log_2 n_1 -1 + (n-n_1) - \log_2 (n-n_1) - 1 + \log_2 (n_1) \\
& = n - (\log_2(n- n_1) + 1) - 1 \\
& = n - \log_2{2(n- n_1)} - 1 \\
& \leq n - \log_2 n - 1 \text{ as } n \geq 2n_1.
\end{align*}
\end{proof}
Note that we improved the analysis of the original authors. We restate the main theorem presented by them accordingly.
\begin{theorem}\label{shortest_mainTheo}
Given the edge set is acyclic, the shortest path algorithm maintains at most $2$ in-degree in all nodes by doing at most
$n - \log_2 n - 1$ edge reorientations over $n$ edge additions. Consequently, the shortest path algorithm achieves $O(1)$ competitiveness with $O(n)$ edge reorientations over $n$ arrivals.
\end{theorem}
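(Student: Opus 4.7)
The plan is to verify each of the three assertions by drawing on the lemmas already established. For the in-degree bound, I would proceed by induction on the number of arrivals, splitting on the three cases that the algorithm itself distinguishes upon the arrival of $(u,v)$: both endpoints unsaturated, exactly one unsaturated, or both saturated. The first two cases preserve the invariant trivially. In the third case, the preceding existence lemma guarantees unsaturated nodes $u' \in T_u$ and $v' \in T_v$ together with paths $P_{u' \to u}$ and $P_{v' \to v}$; a flip along the chosen shorter path is in-degree preserving at every interior node (an in-edge becomes an out-edge and vice versa) while decrementing the saturated endpoint's in-degree by $1$, so after orienting $(u,v)$ toward that endpoint every node still has in-degree at most $2$.

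For the total number of reorientations, my plan is to verify that the flip count $T(n)$ satisfies the recurrence of the preceding lemma, whose closed-form solution $T(n) \leq n - \log_2 n - 1$ is then immediate. The argument rests on two observations. First, since the edge set is acyclic, every arriving edge either has an unsaturated endpoint (costing no flips) or joins two previously disjoint trees $T_u$ and $T_v$. Second, the paragraph preceding the lemma already shows that in the latter case the flip count is at most $\log_2 \min(|T_u|,|T_v|)$, by embedding a complete binary tree of depth $|P_{u' \to u}|-1$ inside $T_u$ (and analogously in $T_v$). Partitioning the $n$ arrivals into those populating $T_u$ before the merge, those populating $T_v$, and the single bridging edge—and noting that flips inside disjoint trees cannot interact—yields exactly the stated recurrence.

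Competitiveness is then the easy consequence: an acyclic edge set forms a forest, which always admits an orientation of maximum in-degree $1$ (root each tree arbitrarily and orient every edge away from its root), so the algorithm's in-degree bound of $2$ gives competitive ratio at most $2 = O(1)$, while the bound $n - \log_2 n - 1 = O(n)$ gives the recourse claim. The main obstacle is the reduction to the recurrence in the second paragraph above: one must justify that the worst case over all adversarial sequences is captured by the ``two independently grown trees joined by a final merging edge'' decomposition. The informal justification rests on monotonicity of the flip cost in tree sizes and independence across disjoint trees, and a fully rigorous treatment would require a structural induction on $n$—this is the subtle point I would want to make completely precise before invoking the preceding lemma as a black box.
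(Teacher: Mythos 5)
Your proposal is correct and follows essentially the same route as the paper, which proves this theorem by combining the case analysis of the algorithm (preserving the in-degree invariant under path flips), the complete-binary-tree argument bounding each merge by $\log_2 \min(|T_u|,|T_v|)$ flips, the recurrence of the preceding lemma solving to $n - \log_2 n - 1$, and the observation that a forest admits a $1$-orientation so the competitive ratio is $2$. The subtle point you flag — justifying that the worst case is captured by the two-trees-plus-merging-edge recurrence — is also left informal in the paper itself, which simply asserts that the worst case is when the edges are stacked in two trees.
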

We finally note that generalizing results for an arbitrary in-degree constraint $c \geq 2$ is possible by changing the base of the logarithm.
As we have showed, we have binary trees up to unsaturated nodes when $c=2$. For an arbitrary $c \geq 2$, we would have $c$-ary trees.
We can upper bound the maximum number of flips per step as $\log_c n$ and upper bound the total number of flips over $n$ edge additions as $\cfrac{n - log_2 n - 1}{log_2 c}$.

\subsection{All Flip Algorithm}
We now present the algorithm given in \cite{Brodal99dynamicrepresentations} which achieves $O(1)$ competitive ratio with $O(n)$ reorientations when the arboricity of the graph is bounded by $c$ during the entire edge addition sequence.
Like the previous algorithm, algorithm keeps the in-degree of all nodes below some constant $c$ over $n$ edge additions.
The arboricity $c$ of a graph is defined as $$ c =  \max_{J} \frac{|E(J)|}{|V(J)| - 1}, $$ where $J$ is any subgraph in $G = (V,E)$ with $|V(J)| \geq 2$ nodes and $|E(J)|$ edges.
The importance of the bound on arboricity is due to the following theorem.

\begin{theorem} (Nash-Williams~\cite{Nash-Williams01011964})
A graph $G = (V, E)$ has arboricity $c$ if and only if $c$ is the smallest number of sets $E_1, \dots, E_c$ that $E$ can be partitioned into, such that each subgrah $(V, E_i)$ is a forest.
\end{theorem}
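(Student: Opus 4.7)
The plan is to prove the biconditional by establishing each direction separately. The forward direction, that partitionability into $c$ forests implies the arboricity bound, is a routine edge count. The reverse direction, that the density condition forces an actual partition into forests, carries the substantive content, and I would handle it via the matroid union theorem applied to the graphic matroid.

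For the forward direction, I would suppose $E = E_1 \cup \cdots \cup E_c$ is a disjoint union with each $(V, E_i)$ a forest. For any subgraph $J$ with $|V(J)| \geq 2$, the edges of $E_i$ with both endpoints in $V(J)$ form a forest on $V(J)$, so there are at most $|V(J)| - 1$ of them. Summing over $i$ gives $|E(J)| \leq c(|V(J)|-1)$, and dividing yields $\frac{|E(J)|}{|V(J)|-1} \leq c$. Taking the maximum over $J$ bounds the arboricity by $c$, so the minimum forest count is an upper bound for the arboricity.

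For the reverse direction, let $M$ denote the graphic matroid on $E$: a set $S \subseteq E$ is independent iff $(V, S)$ is acyclic, with rank function $r_M(S) = |V| - \kappa(V, S)$, where $\kappa(V, S)$ counts the connected components of $(V, S)$. The $c$-fold matroid union $M \vee \cdots \vee M$ has, by Edmonds' formula, rank
\[
r\bigl(M^{\vee c}\bigr) \;=\; \min_{F \subseteq E}\bigl(|E \setminus F| + c \cdot r_M(F)\bigr),
\]
and a set is independent in $M^{\vee c}$ iff it partitions into $c$ forests. To conclude that $E$ itself is such a set, I would verify $c \cdot r_M(F) \geq |F|$ for every $F \subseteq E$. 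This reduces to the arboricity hypothesis applied component-wise: for each non-trivial connected component $C$ of $(V, F)$ the hypothesis gives $|E(C)| \leq c(|V(C)| - 1)$, and summing over the non-trivial components produces exactly $|F| \leq c \cdot r_M(F)$.

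The main obstacle will be the reverse direction, specifically the appeal to matroid union, which is itself a nontrivial result. A self-contained alternative is a direct exchange argument: begin with a partition of $E$ that minimizes the number of forests used, assume for contradiction that some edge is rejected by every forest, and augment along alternating exchanges in a suitable auxiliary graph to produce a subgraph $J$ whose edge count violates the density bound $|E(J)| \leq c(|V(J)|-1)$. Combining the two directions shows that the minimum number of forests that partition $E$ equals $\lceil \max_J |E(J)|/(|V(J)|-1) \rceil$, which is the content of the theorem.
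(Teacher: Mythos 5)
The paper does not prove this statement at all---it is quoted as a classical result of Nash--Williams with only a citation---so there is no in-paper argument to compare yours against; I can only assess your proof on its own terms, and it is correct. The forward direction (restrict each forest to $V(J)$ and count edges) is exactly right. The reverse direction via the $c$-fold union of the graphic matroid is the standard modern proof: $E$ partitions into $c$ forests iff $E$ is independent in $M^{\vee c}$, which by Edmonds' rank formula reduces to checking $c \cdot r_M(F) \geq |F|$ for all $F \subseteq E$, and your component-wise application of the density bound (isolated vertices contributing nothing to either side) closes that correctly. Two small points deserve attention. First, the matroid union theorem is itself of comparable depth to the result being proved, so as you acknowledge the argument is not self-contained; your sketched exchange-argument alternative is the route Nash--Williams originally took, but as written it is only a plan, not a proof. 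Second, the paper's definition of arboricity as $\max_J |E(J)|/(|V(J)|-1)$ need not be an integer, so the theorem as stated in the paper is slightly imprecise; your closing remark that the minimum number of forests equals $\bigl\lceil \max_J |E(J)|/(|V(J)|-1) \bigr\rceil$ is the correct formulation and quietly repairs this. You should make that repair explicit rather than leaving it to the final sentence, since the ``only if'' direction of the paper's phrasing is false for non-integral density maxima.
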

In other words, if the arboricity of $G$ is bounded by $c$ then we can partition $G$ into $c$ forests. In each of these forests, we can pick an arbitrary node as the root and orient all the edges away from it.
So, in-degree of each node can be at most one in each forest and at most $c$ in the entire graph as a node can be in all of the forests.
Thus for any graph $G$ whose arboricity is bounded by $c$, there exists an orientation of the edges such that no node has more than $c$ in-degree.

The set of edge orientations such that no node has more than $\delta$ in-degree is called $\delta$-orientations.
Assuming the arboricity of the graph is bounded by $c$ during the entire edge addition sequence, the algorithm given by them finds a $\Delta$-orientation if the given edge set allows a $\delta$-orientation for $\delta \geq c$ and if $\Delta \geq 2\delta$.
It runs as follows: when the algorithm gets a new edge $(u, v)$, it orients it arbitrarily to one of its endpoints.
Assume it is oriented towards $v$. Then, if the in-degree of $v$ is still bounded by $\Delta$, algorithm proceeds to the next input.
Otherwise ($v$ has $\Delta + 1$ in-degree), the algorithm flips all the incoming edges at $v$. If this causes
some other nodes to have $\Delta + 1$ in-degree,  algorithm flips their incoming edges too. This flipping process continues until all the nodes have at most $\Delta$ in-degree.

Before presenting the proof that shows the algorithm indeed works, i.e. the flipping process eventually terminates in a state where all nodes have at most $\Delta$ in-degree,
we give an example case to build some intuition.
Assume the set of arriving edges form a tree. So, our input allows a $\delta$-orientation for $\delta = 1$ and say we want a $\Delta$-orientation for
$\Delta = 2$.
Define potential $\Psi(t)$ as the number of edges that have a different orientation with respect to a fixed $\delta$-orientation in our graph after $t$ edge additions and note that this potential
can increase at most by 1 with a new addition.
Assume after $t$'th edge addition a node $v$ gets 3 incoming edges.
We know at most 1 of those edges can have the same orientation with respect to our fixed $\delta$-orientation.
Hence, at least 2 of these incoming edges must have a different orientation.
We see that that if we flip all the $3$ incoming edges at $v$, $\Psi(t)$ must decrease at least by 1. As noted before, flipping those edges
might cause problems at some other nodes. However, by the same argument we can keep reducing $\Psi(t)$ by repeating the process at them. The following
lemma gives an upper bound on the number of flips over $n$ edge addition and proves that the algorithm eventually achieves the desired state.

\begin{lemma}
For an arboricity c preserving edge addition sequence $\sigma$ on an empty graph, let $G_t = (V, E_t)$ denote the graph after $t$'th edge addition and let length of the sequence be $n$.
If the given sequence $\sigma$ allows a $\delta$-orientation, then the algorithm does at most $$ n\frac{\Delta+1}{\Delta + 1 - 2\delta} $$ edge reorientations on $\sigma$ given that $\Delta \geq 2\delta$ and $\delta \geq c$.
\end{lemma}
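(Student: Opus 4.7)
The plan is to run the potential-function argument foreshadowed in the tree example just preceding the lemma. First I would fix a $\delta$-orientation $D^{*}$ of the final graph $G_n$, which exists by hypothesis; since removing edges cannot increase arboricity (the maximum in the definition is taken over a smaller family of subgraphs), the restriction of $D^{*}$ to $E_t$ is automatically a $\delta$-orientation of every intermediate $G_t$. I would then define the potential
\[
\Psi(t) = |\{\, e \in E_t : \text{the algorithm's orientation of } e \text{ disagrees with } D^{*} \,\}|,
\]
so that $\Psi(0) = 0$ and $\Psi(t) \geq 0$ throughout the sequence.

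Next I would bound the effect of the two kinds of events on $\Psi$. An edge arrival, oriented arbitrarily by the algorithm, can only increase $\Psi$ by at most $1$. For a flip batch at a node $v$ (in which the $\Delta+1$ edges currently pointing into $v$ are all flipped), the key observation is that $v$ has at most $\delta$ incoming edges under $D^{*}$, so at least $\Delta+1-\delta$ of the $\Delta+1$ flipped edges were misaligned with $D^{*}$ and now become aligned, while at most $\delta$ of them swap from aligned to misaligned. The net drop of $\Psi$ per batch is therefore at least $(\Delta+1-\delta) - \delta = \Delta+1-2\delta \geq 1$.

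Since each batch consists of exactly $\Delta+1$ reorientations, a total of $F$ reorientations corresponds to $F/(\Delta+1)$ batches, so the cumulative decrease in $\Psi$ over the sequence is at least $F \cdot \frac{\Delta+1-2\delta}{\Delta+1}$. Combining this with the at-most-$n$ total increase and with $\Psi(n) \geq 0$ gives
\[
0 \;\leq\; \Psi(n) \;\leq\; n - F \cdot \frac{\Delta+1-2\delta}{\Delta+1},
\]
which rearranges to the claimed $F \leq n \cdot \frac{\Delta+1}{\Delta+1-2\delta}$. Because each batch strictly decreases the nonnegative integer $\Psi$, the same argument additionally shows that the cascade of batches triggered by a single arrival must terminate, confirming that the algorithm really does restore the $\Delta$-orientation invariant before the next arrival.

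The main obstacle I expect is handling the cascade cleanly: a single arrival can force a flip batch at $v$, which may create a violating neighbour, and so on. The argument survives this only because the bound ``at most $\delta$ of $v$'s incoming edges agree with $D^{*}$'' depends solely on the fixed reference orientation $D^{*}$ and not on the algorithm's transient state, so each batch in the cascade independently contributes the full net drop of $\Delta+1-2\delta$ and can be accounted for separately in the global sum. A smaller subtlety is checking the restriction claim for $D^{*}$ on intermediate graphs, but this is immediate since every node's in-degree can only decrease when edges are removed.
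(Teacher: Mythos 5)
Your proof is correct and follows essentially the same potential-function argument as the paper: the same potential $\Psi$ counting disagreements with a fixed reference $\delta$-orientation, the same per-batch net drop of at least $\Delta+1-2\delta$ against a per-arrival increase of at most $1$, and the same telescoping to obtain the bound. The one point you handle more explicitly than the paper is pinning down the reference orientation (as the restriction of a $\delta$-orientation of $G_n$ to each $E_t$), which cleanly justifies that the reference does not change on already-present edges; this is a welcome clarification, not a different approach.
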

\begin{proof}
Consider a fixed $\delta$-orientation built by $\sigma$ on the node set $V$. Let $\bar{E_t}$ denote the edge set of the fixed $\delta$-orientitation after its $t$'th edge addition.
An edge in $E_t$ is denoted \textit{good} if it has the same orientation in $\bar{E_t}$. Otherwise it is denoted \textit{bad}. Define a potential function as the following
$$ \Psi(t) =  \text{the number of bad edges in } E_t .$$

Note that this potential is non-negative and $\Psi(0) = 0$.
Assume after $t$'th addition, a node $v$ gets $\Delta + 1$ incoming edges.
We note that at most $\delta$ of them can be good and consequently, at least $\Delta + 1 - \delta$ of them are bad.
Thus, if all the incoming edges are flipped at $v$, at most $\delta$ edges may become bad and at least $\Delta + 1 - \delta$ edges become good.
So by flipping all the incoming edges at $v$, we reduce the  bad edges by at least $(\Delta + 1 - \delta) -\delta = \Delta + 1 - 2\delta$.
After $t$'th addition, the number of bad edges can be at most $\Psi(t-1) + 1$  and we can make at most $\Psi(t)$ of them good (simply from the definition, i.e.
$\Psi(t)$ is the number of bad edges after $t$'th addition) in that step. Since we know an all-flip (flipping all the incoming edges at some node) decreases bad edges by
at least $\Delta + 1 - 2\delta$, we can have at most $(\Psi(t-1) + 1 - \Psi(t))/\Delta + 1 - 2\delta$ all-flips after $t$'th addition.
Summing this over $n$ edge additions gives us an upper bound on the total number of all-flips.
We have
$$\sum_{t=1}^n \frac{\Psi(t-1) + 1 -\Psi(t)}{\Delta + 1 - 2\delta} = \frac{\Psi(0) + n -\Psi(n)}{\Delta + 1 - 2\delta} \leq  \frac{n}{\Delta + 1 - 2\delta},$$
and since an all-flip is simply flipping $\Delta + 1$ edges, the total number of flips are bounded from above by
$$  n\frac{\Delta+1}{\Delta + 1 - 2\delta} .$$
\end{proof}
For our example case, we had $\delta = 1$ and $\Delta = 2$. Hence, the algorithm would do at most $3n$ flips over $n$ additions.

\section{Adversary Against the Shortest Path Algorithm}
We now prove some lower bounds for the shortest path algorithm. We do this
by analyzing the algorithm against an adversary who supplies edges that forces the algorithm to make
expensive choices. We again assume the in-degree constraint is $2$ and the set of arriving edges are acyclic.
We remind that a node is called saturated if it has $2$ in-degree and start by giving a few definitions.

\begin{definition} \label{path-def}
{\textit{Variant of the shortest path algorithm.}}
An algorithm is called a variant of the shortest path algorithm if it satisfies the following conditions.
\begin{enumerate}
\item It reorients edges only when it is necessary (i.e. when there is a constraint violation)
and always towards the shortest of the available paths. It can arbitrarily chose among such paths when there is a tie.
\item It can orient an edge between two unsaturated nodes in any manner.
\end{enumerate}
\end{definition}
The shortest path algorithm we presented in the previous section of course fits this description.

\begin{definition}
{\textit{Adversary}.} An adversary is defined as a player who has complete knowledge about the algorithm and can pass any valid input to it.
\end{definition}

\begin{definition}{\textit{$T^m.$}}
$T^m$ is the set of trees with saturated root and shortest path of length $m$ where $m \geq 1$.
Further for a $t_m \in T_m$, we define its size as the number of edges it has and denote it by $|t_m|$.
\end{definition}

We note that we can express the state of our graph at each step as a combination of $t_m$ trees.
We now show how an adversary can force the construction of such trees. Basically by using them, an adversary would be
able to force more complex constructions.
\newpage

\begin{lemma} \label{sizeLemma}
An adversary can force any variant of the shortest path algorithm to construct a $t_m \in T_m$ by using at most $5 \cdot 2^{m-1} - 2$ edges.
\end{lemma}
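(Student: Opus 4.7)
The plan is to induct on $m$, building a $t_m$ by gluing two $t_{m-1}$'s through a single auxiliary vertex. The recurrence driving the argument is $f(m) \le 2 f(m-1) + 2$ with $f(1) = 3$, whose solution is exactly $f(m) = 5 \cdot 2^{m-1} - 2$, matching the claimed bound.

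For the base case $m = 1$, the adversary introduces four fresh vertices and adds two disjoint edges $(a_1, b_1)$ and $(a_2, b_2)$. Both endpoints of each edge are unsaturated, so by clause~2 of Definition~\ref{path-def} the variant orients them freely; still, each such edge produces exactly one in-degree-$1$ vertex in its component. Call these two vertices $p$ and $q$ (identified adaptively by the adversary after seeing the orientations). The adversary then issues the edge $(p, q)$: again both endpoints are unsaturated so the variant orients freely, but whichever direction it chooses, one of $p, q$ ends up with in-degree $2$ while the other remains unsaturated. The saturated endpoint is then a root with an unsaturated neighbour at distance one, so we have a $t_1 \in T_1$ built with $3 = 5 \cdot 2^{0} - 2$ edges.

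For the inductive step, assume the adversary can force some $t_{m-1} \in T_{m-1}$ with at most $5 \cdot 2^{m-2} - 2$ edges. The adversary first builds two such trees $T^{(1)}, T^{(2)}$ on disjoint vertex sets with saturated roots $r_1, r_2$, costing at most $5 \cdot 2^{m-1} - 4$ edges in total. Then the adversary introduces a fresh vertex $s$ and issues $(s, r_1)$ followed by $(s, r_2)$. The key claim is that, regardless of how any variant resolves its choices, the final orientations of these two edges are $r_1 \to s$ and $r_2 \to s$: for each new edge one endpoint is saturated (some $r_i$) and the other unsaturated ($s$), and if the variant initially orients toward $r_i$ then $r_i$ reaches in-degree $3$, so by clause~1 of Definition~\ref{path-def} the variant must flip along the shortest path from $r_i$ to an unsaturated vertex. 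But $s$ is now a neighbour of $r_i$ and is unsaturated, so that shortest path is the new edge itself (every other neighbour of $r_i$ is saturated by the $T_{m-1}$ property), and the flip simply reverses the bad choice. Either way $s$ gains an in-edge from $r_i$ and nothing else changes.

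After both additions $s$ has in-degree $2$ and is saturated, while the interiors of $T^{(1)}, T^{(2)}$ and the roots $r_1, r_2$ are untouched. To see that the resulting tree is in $T_m$ with root $s$: every path out of $s$ first crosses one of $r_1, r_2$ (both saturated) and then enters the corresponding $T^{(i)}$, in which all vertices at distance less than $m-1$ from $r_i$ are saturated by the defining property of $T_{m-1}$; hence every vertex at distance less than $m$ from $s$ is saturated, and the shortest path from $s$ to an unsaturated vertex has length exactly $m$. Summing the edge counts gives $5 \cdot 2^{m-1} - 4 + 2 = 5 \cdot 2^{m-1} - 2$. The step I would write most carefully is the length-one flip argument in the inductive step, since it is what rescues the construction from adversarial choices by the variant; everything else is clean bookkeeping.
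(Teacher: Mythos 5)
Your construction is the same as the paper's: the identical base case on four vertices, the identical gluing of two disjoint $t_{m-1}$'s to a fresh vertex through two edges, and the recurrence $f(m)\le 2f(m-1)+2$, $f(1)=3$. The only place you diverge is the "length-one flip" sub-argument in the inductive step, and that is exactly where a claim fails. You assert that if the variant first orients $(s,r_i)$ toward the saturated root $r_i$, the forced reorientation must reverse that very edge because "every other neighbour of $r_i$ is saturated by the $T_{m-1}$ property." That is false for $m=2$: the defining property of $t_1$ is that the shortest path from an unsaturated node to the root has length exactly $1$, so $r_i$ has an unsaturated in-neighbour inside $T^{(i)}$. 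There is then a tie between two length-$1$ paths, and clause~1 of Definition~\ref{path-def} lets the variant break the tie by flipping the internal edge instead; in that case $s$ gains no in-edge, $T^{(i)}$ is disturbed, and your induction does not go through. (For $m\ge 3$ the interior unsaturated nodes are at distance $m-1\ge 2$ and your argument is fine.)

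The repair is to read clause~1 the way the paper does: when an edge arrives with one saturated and one unsaturated endpoint, the zero-flip option of orienting toward the unsaturated endpoint is itself the shortest available path, so a variant is not permitted to orient toward $r_i$ in the first place. With that reading the edges $(s,r_1)$ and $(s,r_2)$ are oriented toward $s$ immediately, no reorientation occurs, the two subtrees are untouched, and the rest of your bookkeeping (including the distance argument showing the result lies in $T^m$) is correct.
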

\begin{proof}
We give a proof by induction. Consider the base case $m = 1$. Adversary needs at most $4$ single nodes. Let them be $a, b, c$ and $d$. Adversary first gives the edge $(a, b)$. By point 2 of the
Definition \ref{path-def}, this edge can be oriented to any direction. As everything is symmetric, we can assume it is oriented towards $b$ without loss of generality. Adversary then gives the edge $(c, d)$. By the same argument,
assume it is oriented towards $d$ without loss of generality. Then by giving the edge $(b, d)$, adversary will have a $t_1$ as either orientation will cause a saturated
node with shortest path of length $1$. Following figure illustrates this process.

\begin{figure}[ht] \label{figt1}
  \centering
\begin{tikzpicture}[->,>=stealth',auto,node distance=3cm,
  thick,main node/.style={circle,draw,font=\sffamily\bfseries}]

  \node[main node] (1) {a};
  \node[main node] (2) [right of=1] {b};
  \node[main node] (3) [right of=2] {c};
  \node[main node] (4) [right of=3] {d};

  \path[every node/.style={font=\sffamily\small}]
    (1) edge node [right][above] {1} (2)
    (3) edge node [right][above] {2} (4)
    (2) edge[bend right] node [left][below] {3} (4);
\end{tikzpicture}
\caption{Construction of a $t_1$ with root $d$. Numbers on the edges indicate their addition order.}
\end{figure}
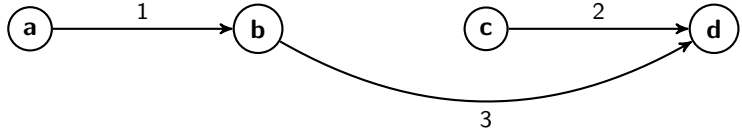

Assume the construction of $t_{m-1}$ can be forced by using at most $5 \cdot 2^{m-2} - 2$ edges.
To force the construction of a $t_m$, adversary first forces the construction of two $t_{m-1}$ using disjoint sets of nodes. Let $a$ be the root of the first $t_{m-1}$, let $b$ be the root
of the second and let $c$ be a single node separate from the constructed trees. Adversary first gives the edge $(a, c)$. By point 1 of the Definition \ref{path-def}, the edge will orient towards $c$.
Adversary then gives the edge $(b, c)$ and by the same argument, it will be oriented towards $c$. After these two edges, $c$ is a saturated
node with shortest path of length $m$ giving us a $t_m$ with root $c$.
Hence, the total number of edges used to force the construction of a $t_{m}$ is at most $2|t_{m-1}| + 2 = 5 \cdot 2^{m-1} - 2$.
\end{proof}

\begin{corollary}
An adversary can force any variant of the shortest path algorithm to flip $\log_2 m$ edges in a single step using at most $5m-3$ edges in total.
\end{corollary}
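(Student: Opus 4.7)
The plan is to apply Lemma \ref{sizeLemma} twice to build a configuration in which a single new edge forces the algorithm down a path of length $\log_2 m$. Concretely, I would have the adversary first force the construction of two trees $t, t' \in T_{\log_2 m}$ on disjoint sets of nodes. By Lemma \ref{sizeLemma}, each such tree can be produced using at most $5 \cdot 2^{\log_2 m - 1} - 2 = \tfrac{5m}{2} - 2$ edges, so the two together cost at most $5m - 4$ edges. Let $u$ be the root of $t$ and $v$ be the root of $t'$; both are saturated, and in each tree the shortest path from the root to any unsaturated node has length exactly $\log_2 m$ by the definition of $T_{\log_2 m}$.

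Next, the adversary issues the single edge $(u,v)$, which is a valid input because $t$ and $t'$ lie in different components and so the edge set remains acyclic. Since $u$ and $v$ are both saturated, point 1 of Definition \ref{path-def} forces the variant to resolve the violation by flipping along the shortest available path from an unsaturated node to either $u$ or to $v$. Both of these paths have length $\log_2 m$, so regardless of how ties are broken the algorithm performs exactly $\log_2 m$ flips in this single step.

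Summing up, the total number of edges used by the adversary is at most $(5m - 4) + 1 = 5m - 3$, and the final edge addition triggers $\log_2 m$ flips, as required.

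I do not anticipate a real obstacle here: the only thing to check carefully is that constructing two copies of $t_{\log_2 m}$ on disjoint node sets does not interfere with the hypotheses of Lemma \ref{sizeLemma}, which is immediate because the lemma's construction is local to a fixed pool of fresh nodes. The mild annoyance is purely notational, namely tacitly assuming that $m$ is a power of two so that $\log_2 m$ is an integer; otherwise one would round and absorb the resulting constant into the bound, but this does not affect the asymptotic statement.
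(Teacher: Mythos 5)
Your proposal is correct and matches the paper's proof exactly: both construct two disjoint $t_{\log_2 m}$ trees via Lemma \ref{sizeLemma} ($5m-4$ edges total) and then join their roots with one more edge, forcing $\log_2 m$ flips with $5m-3$ edges overall. Your remark about $m$ being a power of two is a tacit assumption the paper also makes.
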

\begin{proof}
Adversary first forces the construction of two $t_{\log_2 m}$ using Lemma \ref{sizeLemma}. Then by giving the edge that connects the roots of these trees,
adversary will cause $\log_2 m$ flips. In total, $2|t_{\log_2 m}| + 1 = 5m-3$ edges are used.
\end{proof}
Notice that if we set the total number of edges as $n = 5m-3$ then we have $\Omega(\log n)$ flips.
Recall that we have the upper bound $O(\log n)$ for the number of edge flips in a single step
so, we conclude that this bound is tight.

\begin{lemma}
Any variant of the shortest path algorithm can be forced to flip $\Omega(n)$ edges in total after $n$ edge additions.
\end{lemma}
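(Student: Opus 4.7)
The plan is to chain together many independent copies of the construction from the previous Corollary. That Corollary shows an adversary can force $\log_2 m$ flips in a single step while using at most $5m - 3$ edges in total to set up the necessary substructure. Taking $m$ to be a small constant already yields a positive flip count per construction: choosing $m = 2$, the adversary uses at most $5 \cdot 2 - 3 = 7$ edges and forces at least $\log_2 2 = 1$ flip. The smaller value $m = 1$ would be useless since it forces no flips.

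Performing such a construction once contributes only $\Theta(1)$ flips, so I would repeat it $k := \lfloor n/7 \rfloor$ times, each repetition carried out on a fresh, disjoint pool of nodes. The problem places no upper bound on the number of nodes, so drawing new nodes is always possible. After all repetitions, the adversary has used at most $7k \leq n$ edge additions and, since each copy produced at least one flip, has forced the algorithm to perform at least $k = \lfloor n/7 \rfloor = \Omega(n)$ flips in total. If $n$ is not divisible by $7$ the leftover edges contribute nothing, which only affects the constant in front of $n$.

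The one point that needs justification is that the copies really do not interact. This is immediate from the disjointness of their node sets: when the final edge of a given copy triggers a saturation violation, the shortest-path search invoked by the algorithm explores only the two trees that contain the edge's endpoints, both of which sit entirely inside that single copy. Hence the saturation states inside one copy are untouched by edges of any other copy, and the argument of Lemma \ref{sizeLemma} and its Corollary applies verbatim to each copy separately. The flip counts from different copies simply add, giving the claimed $\Omega(n)$ total. I do not foresee any obstacle beyond this bookkeeping; the main trick is just recognising that once we have a recipe yielding a constant number of flips per constant number of edges, iterating it on fresh nodes automatically produces a linear lower bound.
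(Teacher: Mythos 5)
Your proof is correct and follows essentially the same approach as the paper: both build $\Theta(n)$ constant-size $t_1$ gadgets via Lemma \ref{sizeLemma} and then force one flip per edge joining two saturated roots. The only difference is cosmetic --- the paper attaches $k$ such trees to a single hub tree (yielding roughly $n/4$ flips) while you use fully disjoint pairs (yielding roughly $n/7$); if anything, your variant sidesteps having to track how the hub tree evolves after repeated attachments.
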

\begin{proof}
The adversary first constructs $k+1$ disjoint $t_m$ trees using Lemma \ref{sizeLemma}.
\begin{figure}[ht]
  \centering
  \begin{tikzpicture}[{root/.style={circle,draw,fill,inner sep=2 pt}}]
  \node[root] (rootA) {};
  \node[root,xshift=3cm] (rootB) {};
  \node[root,yshift =-5cm,xshift=3cm] (rootC) {};
  \draw (rootA) -- +(-1,-2) -- node[yshift=0.75cm]{$t_m$} +(1,-2) -- (rootA);
  \draw (rootB) -- +(-1,-2) -- node[yshift=0.75cm]{$t_m$} +(1,-2) -- (rootB);
  \draw [dotted] (3, -2.5) -- (3, -4.5);
  \draw (rootC) -- +(-1,-2) -- node[yshift=0.75cm]{$t_m$} +(1,-2) -- (rootC);
\end{tikzpicture}
\caption{A single $t_m$ on the left, k $t_m$ on the right.}
\end{figure}
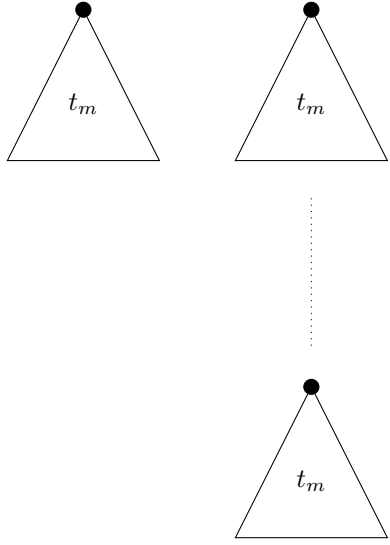
Now when we look at the figure below, we see that for every $t_m$ on the right,
an edge can be added between its root and the root of $t_m$ on the left. For every such edge, there will be $m$ flips due to
point 1 of Definition \ref{path-def}. Thus, forcing the algorithm to make $km$ flips requires $k$ edges after reaching the construction in the figure.
Total number of edges required to do $km$ flips is then given by $$f(k, m) = k + (k+1)|t_m| \leq n .$$  Clearly, $f(k, m)$ is maximized when $m=1$ as $|t_m|$ grows exponentially with $m$.
Hence, an upper bound on $k$ when $m=1$ will be an upper bound on the total number of flips for this construction. Solving the inequality for $k$ with $m = 1$ gives $$k \leq \frac{n-|T_1|}{|T_1| + 1} = \frac{n-3}{4}.$$
So, we have $f(k, m) \leq \frac{n-3}{4}$ with equality iff $m = 1$ and $\frac{n-3}{4}$ is an integer.
\end{proof}
By combining this result with Theorem \ref{shortest_mainTheo}, we see that the total number of flips is $\Theta(n)$. So, we can not asymptotically do better by handling edges between unsaturated nodes cleverly.

\begin{lemma}
A single edge can be forced to flip  $\Omega(\log n)$ after $n$ edge additions.
\end{lemma}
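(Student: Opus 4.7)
The plan is a recursive adversary construction that forces a designated edge $e$ to be flipped $k = \Omega(\log n)$ times using $O(2^k)$ total edge additions. The key insight is that each successive flip of $e$ requires the shortest path through $e$ at one of its endpoints to be longer (and, for the bound to apply to every variant, to be uniquely shortest), so between flips the adversary must attach a geometrically growing amount of \emph{padding} structure around the relevant endpoint of $e$ in order to saturate all shorter alternatives.

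For the base case, I first build two disjoint $t_1$ trees via Lemma~\ref{sizeLemma} and connect their roots. This forces a single flip along a length-$1$ path, and I designate the flipped spine edge as the target edge $e$. For the inductive step, assuming $e$ has been flipped $i-1$ times by the previous phases, I attach a fresh $t_{i-1}$-like substructure to the appropriate endpoint of $e$ (the one that, after the last flip, needs its neighborhood extended) and then add a single new saturated--saturated edge connecting a new saturated node to the other endpoint of $e$. By the sizing of the substructure, the unique shortest path from the triggered endpoint to an unsaturated node has length $i$ and is forced to pass through $e$, so any variant of the shortest path algorithm must flip $e$. Phase $i$ costs $O(|t_{i-1}|) = O(2^i)$ new edges, so summing over phases $1, 2, \ldots, k$ gives a total of $O(2^k)$ edges, and hence $k \geq \log_2 n - O(1) = \Omega(\log n)$ flips on the single edge $e$.

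The hard part will be verifying that the configuration left behind after each flip is genuinely consistent with the requirements of the next phase. When $e$ is flipped, the in-degrees of all nodes along the current shortest path (including both endpoints of $e$) change, and the relative positions of saturated/unsaturated nodes around $e$ are perturbed. I must track these changes carefully and choose the next $t_{i-1}$-like padding to saturate exactly those newly short alternative paths that would otherwise divert the algorithm away from $e$. A secondary but essential point is tie-breaking: because the statement is about \emph{any} variant of the shortest path algorithm (which may break ties adversarially), the padding in each phase must be sized strictly so that the route through $e$ is the unique shortest path, not merely tied with one. Once this bookkeeping is in place, the geometric-series edge count yields the desired $\Omega(\log n)$ bound.
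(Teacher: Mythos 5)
Your proposal is correct in outline and follows essentially the same route as the paper: iteratively attach geometrically growing $t_m$-gadgets (built via Lemma~\ref{sizeLemma}) around the target edge so that each successive forced flip costs exponentially more edges, giving $\Omega(\log n)$ flips of that edge over $n$ additions. Your extra attention to tie-breaking is warranted rather than redundant --- the paper's own phases attach two equal-index trees and so leave the two candidate escape paths tied, whereas insisting on a strictly shorter route through $e$ is what makes the bound hold for \emph{every} variant.
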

\begin{proof}
Consider the initial setting where the adversary puts an edge between single nodes $a$ and $b$. Without loss of generality, assume the algorithm directed it towards $b$.
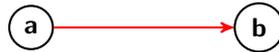
\begin{figure}[ht] \label{mrkd}
  \centering
\begin{tikzpicture}[->,>=stealth',auto,node distance=3cm,
  thick,main node/.style={circle,draw,font=\sffamily\bfseries}]

  \node[main node] (1) {a};
  \node[main node] (2) [right of=1] {b};

  \path[every node/.style={font=\sffamily\small}]
    (1) edge[red] node [right][above] {} (2);
\end{tikzpicture}
\caption{Initial setting. The edge we count flips for is in red.}
\end{figure}

To flip the red edge back to $a$, adversary connects 2 $t_1$ to $b$ due to point 1 of Definition \ref{path-def}.
The first one will saturate $b$ and the second one will force the flip in the
direction of $a$ and consequently will flip the red edge. Note that we now have a $t_2$ rooted at node $b$.
To flip the red edge again, adversary can connect 2 $t_3$ to $a$. By the same reasoning, this will flip the edge towards $b$. Adversary can continue to flip the red edge in this fashion.

The total number of edges required to do $k$ flips on the red edge is then given by $$f(k) = 1 + \sum_{m=1}^{k} 2|t_{2m -1}| + 2.$$
By using Lemma \ref{sizeLemma}, it can be rewritten as $$f(k) = -ax + b\exp\{cx\} + d$$ for some positive constants $a, b, c, d$ and for Euler constant $\exp$.
Finally, solving $f(k) \leq n$ for $k$ will give $k = \Omega(\log n).$
\end{proof}

As last, we consider a slightly different lower bound. Namely, we wonder whether a shortest path algorithm can maintain in-degree constraint
over $n$ additions by doing constant number of flips at each step. As we have showed, an adversary can force any number of flips up to $O(\log n)$ at a step
if the shortest path algorithm satisfies Definition \ref{path-def}. So, we define a \textit{fixing} shortest path algorithm. It is, the algorithm
is now allowed to do unforced flips in order to fix the graph but is still flipping the edges on the shortest path at a constraint violation.
We show maintaining constraint with a single flip at a step is not possible.

\begin{lemma} Given sufficient number of edges, there exists a sequence of edge additions such that a fixing shortest path algorithm
is forced to flip $2$ edges at a step. Consequently given sufficient number of edges, no fixing shortest path algorithm can maintain in-degree constraint $c = 2$ by doing
at most $1$ flip at a step.
\end{lemma}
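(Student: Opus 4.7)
The plan is to give an adversarial edge sequence that forces at least two flips to occur in some single step, for any fixing shortest path algorithm; the second statement of the lemma then follows immediately. The adversary first uses Lemma~\ref{sizeLemma} to build two disjoint $t_2$ trees with saturated roots $e_1, e_2$, using at most $16$ edge additions, none of which triggers a forced flip (each standard $t_2$ edge meets at least one unsaturated endpoint at the time of its arrival). It then issues the single connecting edge $(e_1, e_2)$.

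The key technical device I would introduce is a \emph{hardened} variant of the $t_2$ construction. Before finishing each $t_2$, the adversary pre-builds an auxiliary saturator $t_1$ tree with saturated root $y$, and uses $y$ together with additional saturator roots to deliver extra edges into $b_1, c_1, b_2, c_2$ so that all four of these grandparents of the eventual root $e$ become saturated. Every hardening edge is oriented from a saturated source into an unsaturated target, so it causes no constraint violation at the time of its addition and hence invokes no forced flip. Once the hardened $t_2$ is completed by the two final edges $(d_1, e), (d_2, e)$, the resulting root $e$ is still saturated with shortest unsaturated path of length~$2$, but the path-shortening preemptive moves available to the algorithm in a standard $t_2$ are now blocked.

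The main claim is that in the hardened configuration every single-edge unforced flip that reduces the shortest unsaturated path from $e$ below $2$ must reorient an edge incoming to $d_1, d_2$, or $e$ itself. Each of the six such candidate flips raises the in-degree of some already-saturated node ($b_j$ or $c_j$ when a $d_j$-incoming edge is flipped, or $d_j$ when an $e$-incoming edge is flipped) from $2$ to $3$, immediately triggering the shortest-path rule and thereby bundling at least one further forced flip into the same step. Hence a preemptive single-flip fix of the hardened $t_2$ is in fact a cascade of at least two flips performed in one step. With this established, the case analysis is immediate: either the algorithm performs no such preemptive reduction, in which case at the connecting step both shortest paths have length $2$ and the shortest-path rule forces two flips there; or the algorithm does perform a preemption, in which case the preempting step contains at least two flips by the cascade argument. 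In either case some step has two flips, proving the first statement of the lemma. The second statement follows by contrapositive: a fixing shortest path algorithm obeying a uniform one-flip-per-step bound could not survive either case.

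The principal technical step, and the main obstacle, is verifying that the hardening can be carried out without any individual hardening edge forcing a cascade at the time it is added, and that every conceivable unforced single-edge fix in the completed hardened $t_2$ truly cascades. The first point follows by carefully ordering the hardening edges so that each target is unsaturated at the moment of addition; the second is a finite case analysis over the six candidate flips incident to $d_1, d_2$, and $e$, each of which has its far endpoint saturated by the construction. These checks, though routine, must be written out explicitly to rule out any "distant" unforced flip outside of $e$'s immediate neighbourhood that might shorten its path without cascading; such distant flips do not change which of $d_1, d_2$ are saturated and therefore cannot reduce $e$'s shortest unsaturated path from $2$ to $1$ on their own.
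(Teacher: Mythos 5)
There is a genuine gap, and it is precisely the difficulty that this lemma is really about. Your entire construction is built from the trees of Lemma~\ref{sizeLemma} (the Figure~2 style $t_1$'s and $t_2$'s, plus auxiliary ``saturator'' $t_1$'s for the hardening). Against a \emph{fixing} algorithm those building blocks are not stable: the algorithm may spend its one allowed flip at \emph{any} step on an unforced reorientation anywhere in the graph, and a Figure~2 $t_1$ is destroyed by a single such flip (flip $(c,d)$ back towards $c$; both endpoints end with in-degree $1$ and no new violation appears). So while the adversary is spending $16$ edges assembling two $t_2$'s and further edges hardening the grandparents, the algorithm gets an equal number of steps in which to dismantle the saturated nodes one cheap flip at a time --- in particular it can unsaturate $d_1$ or a saturator root $y$ \emph{before} your hardening edges arrive, at which point the intended configuration is never reached. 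Your analysis only rules out preemptive single flips in the \emph{completed} hardened configuration; it says nothing about the interleaved unforced flips during construction, and the adversary's adaptivity to them is exactly what has to be argued. This is why the paper abandons the Lemma~\ref{sizeLemma} gadgets here and instead builds $t_1$'s by connecting the heads of long chains: in that configuration any single flip that breaks the $t_1$ immediately creates another $t_1$ further down a chain, so the algorithm cannot escape with one flip per step until a chain is exhausted, and taking the chains long enough guarantees eight indestructible $t_1$'s survive to be assembled.

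A secondary inaccuracy: if you saturate all four grandparents $b_1,c_1,b_2,c_2$, the root $e$ no longer has shortest unsaturated path of length $2$ --- every length-$2$ path out of $e$ now ends at a saturated node, so the height of $e$ becomes at least $3$. That does not hurt the goal of forcing two flips (deeper is better for the adversary), but it invalidates your stated ``six candidate flips'' case analysis as written, and it is a sign that the hardened configuration needs to be specified and checked explicitly rather than asserted. The cascade idea itself (an unforced flip into an already-saturated node costs a forced flip in the same step) is sound and is morally the same mechanism the paper exploits with chains; what is missing is a construction whose saturated nodes cannot be quietly unsaturated earlier at a rate of one flip per step.
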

\begin{proof}
  Assume the adversary has constructed eight trees with shortest path of length $1$, i.e. we have $t_{1, i} \in T^1$ for $i \in \{ 1, 2, \dots, 8\}$. Let $t_{1, i} \rightarrow t_{1, j}$ denote
  the edge addition that connects the roots of $t_{1, i}$ and $t_{1, j}$.
  Without loss of generality, we can assume such edges are oriented towards the trees with higher index.
  Then to force $2$ flips, adversary first gives the following sequence to the algorithm:
  $$ (t_{1, 1} \rightarrow t_{1, 2}), (t_{1, 3} \rightarrow t_{1, 4}), (t_{1, 5} \rightarrow t_{1, 6}),  (t_{1, 7} \rightarrow t_{1, 8}) .$$
  Then by giving the edges $(t_{1, 2} \rightarrow t_{1, 4})$ and $(t_{1, 6} \rightarrow t_{1, 8})$, adversary would have two trees with shortest path of length $2$ rooted at
  the roots of $t_{1, 4}$ and $t_{1, 8}$. Note that each addition causes a flip and renders the algorithm null.
  Finally by giving the edge $t_{1, 4} \rightarrow t_{1, 8}$, adversary causes 2 flips.
  So after constructing 8 $t_1$ trees, adversary can force $2$ flips by using $7$ more edges.

  What remains is to show how to construct $t_1$ trees.
  Clearly, constructing them like in Figure 2 would not work as the algorithm could break the tree with a single flip. For
  the given figure, the algorithm would just flip $(c, d)$ after adversary adds $(b, d)$.
  The idea is to construct long chains and connect their heads (the node with 0 out-degree). Observe that there is nothing algorithm
  can do to prevent adversary from constructing such chain without creating a $t_1$.
  The algorithm can pick an arbitrary node at somewhere on the chain and orient both sides to opposite directions but then the adversary can
  pick the longest side of the chain and proceed. Clearly, flipping an edge in-between would create a $t_1$.

  \begin{figure}[ht] \label{fig:chain}
    \centering
  \begin{tikzpicture}[->,>=stealth',auto,node distance=1.5cm,
    thick,main node/.style={circle,draw,font=\sffamily\bfseries}]

    \node[main node] (1) {};
    \node[main node] (2) [right of=1] {};
    \node[main node] (3) [right of=2] {};
    \node[main node] (4) [right of=3] {};
    \node[main node] (5) [right of=4] {};

    \path[every node/.style={font=\sffamily\small}]
      (1) edge node [right][above] {} (2)
      (2) edge node [right][above] {} (3)
      (3) edge node [right][above] {} (4)
      (4) edge node [right][above] {} (5);

  \end{tikzpicture}
  \caption{A chain with length 4.}
  \end{figure}
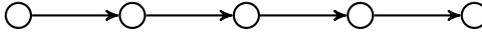

  We see that if we connect the heads of two chains, we would have a $t_1$ rooted at one of the heads.
  To break this $t_1$, the algorithm has to flip one of the edges that point to the heads of chains. But then, that would just create a new $t_1$. We see that the algorithm has to flip
  all the edges that lie on one of the two chains.
  Let $L$ be the length of the chains. The adversary first gives the input sequence that constructs $16$ distinct chains of length $L$ and algorithm
  orients them in some way. By the argument above, adversary has some combination of $t_1$ trees and
  chains of length at least $L/2$. If there are $8$ $t_1$ trees or more, adversary can force $2$ flips as described before. For all other cases,
  adversary can group the chains by $2$ (some chains might not be used), connect their heads and can have $8$ $t_1$ trees.

  We know explain what the sufficient number of edges mean. Assume that after inputting the algorithm, adversary gets $16$ chains of length $L/2$.
  After that point, adversary needs $8$ more edge additions to have $8$ $t_1$ trees. This simply says the chains must be at least of length 9, i.e. $L/2 \geq 9.$
  Thus, we see that we need at least $16 \cdot 18 + 8 + 7 = 303$ edges in total to construct $8$ $t_1$ trees and to force $2$ flips by connecting them in this pattern.
\end{proof}

\section{Online Bipartite $b$-Matching}
We now analyze the online bipartite $b$-matching.
First, we will give the definition of the problem.
Then, we will present the algorithm given in \cite{Gupta:2014:MAO:2634074.2634109} which achieves $O(1)$ competitive ratio by doing $O(n)$ recourse operations over $n$ arrivals.
As last, we will supply its analysis.

We are given a bipartite graph $G=(L, R, E)$ where nodes in $R$ are initially present and nodes in $L$ arrive one-by-one.
When a $v_L \in L$ arrives, it specifies a set of nodes in $R$ whose size is at least $1$. An edge is put between $v_L$ and all of the nodes in the set that $v_L$ specified.
Those nodes are the potential matches of $v_L$.
Among those potential matches, we have to select a single $v_R \in R$ and match $v_L$ to it by putting the edge $(v_L, v_R)$ to the matching set $M \subseteq E$.
While doing this over $n$ arrivals, we want to keep every $v_L$ matched and the degree of nodes in $R$ below some constant $b$.
More formally, $deg_G(v)$ is the degree of $v$ in $G$ and $deg_M(v)$ is the degree of $v$ in $M$. The latter one is also called $load(v)$ if $v \in R$.
Then for $n$ arrivals, we would like to maintain a subset of edges $M$ such that for every $v_L$, we have $deg_M(v_L) = 1$.  That is, every $v_L$ is matched with only one node in $R$.
While satisfying this constraint, we also want to have $\max_{v_R \in R}$ $load(v_R) \leq b$ which is defined as the cost of the problem.

We note that the bound on the competitive ratio given for the online edge orientation holds for this problem too.
If all the arriving nodes specify 2 possible matches, the upper bound and the lower bound on the competitive can be proven exactly as in the Lemma \ref{lwbnd}.
Again, we need recourse if we want to improve the competitive ratio.
For this problem, a recourse operation is defined as changing the match of a $v_L \in L$ from some $v_R \in R$ to a different $v_R' \in R$
where $v_R$ and $v_R'$ are among the potential matches of $v_L$. This is called as \textit{swap}.
The operation is defined so because every node in $L$ has to stay matched with some node in $R$ after its arrival.

In what follows, we present an algorithm that finds a solution with cost $CK$ for some $C \geq 2$ even though the given input allows a solution
with cost $K$. For the analysis, we set $C=2$ without loss of generality. We now give some preliminaries before presenting the authors algorithm.

\subsection{Preliminaries}
The algorithm runs on the residual graph $G^{res} = (L, R, A)$ with $A$ being a directed set of edges (arcs) that is constructed from the given bipartite graph $G$ and
with $L$ and $R$ being the same node sets in $G$. The set $A$ is constructed as follows: whenever we add the edge $(v_L, v_R)$ to $M$, we add
the arc $(v_R, v_L)$ (points to $v_L$) to $G^{res}$. For every other edge $(v_L', v_R') \in E \setminus M$, we add the arc $(v_L', v_R')$ to $G^{res}$.
Let $N(S)$ denote the set of nodes that have an incident edge whose other endpoint is in some node set $S$.
Similarly, let $N_M(S)$ denote the set of nodes that have an incident edge in $M$ whose other endpoint is in some node set $S$.
Respectively, those sets are called \textit{neighborhood of node set S} and \textit{M-neighborhood of node set S}.

A node $v_R \in R$ is called saturated if $2K$ $v_L \in  L$ are matched to it.
This is equivalent to $deg_M(v_R) = 2K$ and it means the out-degree of $v_R$ in $G^{res}$ is $2K$ as each of those matches will give rise to
an arc $(v_R, v_L)$ in $G^{res}$.
For any node $v$ in the given bipartite graph, $height(v)$ denotes the length of a shortest path in $G^{res}$ from $v$ to some unsaturated node.

\begin{lemma} \label{halls}
For any $S\subseteq L$, $N(S) \geq |S|/K $.
\end{lemma}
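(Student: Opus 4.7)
The plan is to invoke the standing hypothesis that the instance admits a feasible solution of cost $K$ and use that solution as a Hall-type witness. Concretely, since the input allows a matching of cost $K$, there is a matching $M^{\star} \subseteq E$ with $\deg_{M^{\star}}(v_L) = 1$ for every $v_L \in L$ and $\deg_{M^{\star}}(v_R) \leq K$ for every $v_R \in R$. This $M^{\star}$ is not the online algorithm's matching $M$; it is only used as an existential witness, and the lemma is really a structural statement about $G$.

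I would first observe that for any $S \subseteq L$, the $M^{\star}$-partner of each $v_L \in S$ is a neighbor of $v_L$ in $G$ and hence lies in $N(S)$. Therefore every edge of $M^{\star}$ incident to $S$ has its right endpoint in $N(S)$.

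Then a two-way counting on the set $M^{\star}(S) := \{ (v_L,v_R) \in M^{\star} : v_L \in S \}$ finishes the argument. Counting from the left gives $|M^{\star}(S)| = \sum_{v_L \in S} \deg_{M^{\star}}(v_L) = |S|$ because every $v_L$ is matched exactly once. Counting from the right gives
\[ |M^{\star}(S)| \leq \sum_{v_R \in N(S)} \deg_{M^{\star}}(v_R) \leq K \cdot |N(S)|, \]
using the cost bound on $M^{\star}$. Comparing the two yields $|N(S)| \geq |S|/K$, which is what the lemma claims.

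There is no real obstacle here; the lemma is Hall's condition for $b$-matchings repackaged. The only point worth being careful about is notational: the statement writes $N(S) \geq |S|/K$ while what is meant is the inequality on cardinalities, and one must keep the offline witness $M^{\star}$ distinct from the online matching $M$ maintained by the algorithm.
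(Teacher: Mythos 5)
Your proof is correct, and it takes a slightly different (and cleaner) route than the paper. The paper first invokes Hall's theorem for the case $K=1$ and then reduces the general case to it by duplicating every node of $R$ $K$ times; note that the direction of Hall's theorem being used there (a feasible matching exists, hence the neighborhood condition holds) is the easy direction, which is precisely the double-counting argument you wrote out explicitly. Your version is self-contained: you take the offline witness $M^{\star}$ of cost $K$ guaranteed by the standing hypothesis, count the edges of $M^{\star}$ incident to $S$ from both sides, and get $|S| = |M^{\star}(S)| \leq K\,|N(S)|$ directly, with no appeal to Hall's theorem and no duplication gadget. What the paper's phrasing buys is a one-line citation to a named theorem; what yours buys is transparency about exactly which hypothesis is used (only the existence of a cost-$K$ solution, with every left node matched exactly once and every right node loaded at most $K$) and it avoids the slightly informal duplication step. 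Your closing remarks --- that the lemma is a statement about cardinalities and that $M^{\star}$ must be kept distinct from the algorithm's matching $M$ --- are both apt.
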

\begin{proof}
By Hall's theorem (\cite{Hall01011935}) and using the fact that the input allows a $K$ matching, we can write $N(S) \geq |S|$ for all $S \subseteq L$ when $K = 1$.
For an arbitrary $K \geq 2$, we just observe that if we duplicate every node in $N(S)$ for $K$ times we can treat the problem as if $K = 1$. So, if we apply the theorem
after the duplication we see that $K \cdot N(S) \geq |S|$.
\end{proof}

\begin{lemma} \label{sizeofsat}
For any $T\subseteq R$ whose all nodes are saturated, $N_M(T) = |T|\cdot 2K$
\end{lemma}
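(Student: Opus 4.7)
The plan is a straightforward double-counting argument, leaning on the facts that (i) the graph is bipartite with $T \subseteq R$ so every edge in $M$ incident to $T$ has its other endpoint in $L$, and (ii) every $v_L \in L$ has $\deg_M(v_L) = 1$, which is the fundamental invariant the algorithm maintains (each arriving left node is matched to exactly one right node).

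First I would count the edges of $M$ incident to $T$ from the right side. By the definition of saturation, each $v_R \in T$ satisfies $\deg_M(v_R) = 2K$, so summing yields
\[
|\{e \in M : e \text{ incident to } T\}| \;=\; \sum_{v_R \in T} \deg_M(v_R) \;=\; |T| \cdot 2K.
\]

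Next I would show that this count equals $|N_M(T)|$. Every such edge has its $L$-endpoint in $N_M(T)$ by definition of the $M$-neighborhood, giving a map from the edge set above into $N_M(T)$. This map is injective: if two distinct edges of $M$ incident to $T$ shared the same endpoint $v_L \in L$, then $\deg_M(v_L) \geq 2$, contradicting the matching constraint that each left node is matched to exactly one right node. The map is also surjective onto $N_M(T)$, because any $v_L \in N_M(T)$ is by definition the endpoint of some edge of $M$ with other endpoint in $T$. Hence $|N_M(T)| = |T| \cdot 2K$.

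There is no real obstacle here; the only thing to be careful about is tying the argument explicitly to the matching invariant $\deg_M(v_L) = 1$, since without it the same sum would only yield the inequality $|N_M(T)| \leq |T| \cdot 2K$. Since this invariant is built into the problem definition, equality follows.
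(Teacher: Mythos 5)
Your proof is correct and takes essentially the same counting approach as the paper, which simply notes that each saturated node in $T$ is matched to $2K$ nodes in $L$ and concludes $|T|\cdot 2K$ total matches. You additionally make explicit the injectivity step (that the match sets are disjoint because $\deg_M(v_L)=1$), which the paper leaves implicit.
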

\begin{proof}
Each node in $T$ is matched to $2K$ nodes in $L$. Hence, the total number of matches of $T$ is simply $|T|\cdot 2K$.
\end{proof}

\subsection{Shortest Augmenting Path Algorithm}
Consider the arrival of a node $v_L$. When it arrives, it specifies its neighborhood set $N(v_L)\subseteq R$.
Algorithm then puts edges $(v_L, v_R)$ for all $v_R \in N(v_L)$ in $G^{res}$. If some nodes in $N(v_L)$ are unsaturated, the algorithm
picks one arbitrarily and puts the corresponding edge to $M$ (which also orients the edge accordingly). Now assume all the nodes in $N(v_L)$ are saturated.
Let $P$ be a shortest path to some unsaturated $y_k \in R$ from $v_L$.
From the construction of $G^{res}$, we know we can write the sequence of nodes on that path as,
$$ v_L=x_0, y_1, x_1, y_2, \dots, x_{k-1}, y_{k} $$ where each $x_i \in L$ and each $y_i \in R$.
We note that all the $y_i$ must be saturated except for $y_k$ by our assumption.

By looking at the sequence, we see that $y_1$ is a possible match for $x_0$ (as we have the arc $(x_0, y_1)$) but it is currently saturated.
Moreover, $x_1$ is currently matched to $y_1$ but it could be matched to $y_2$ if $y_2$ becomes unsaturated.
Looking at the end of the path we see that $x_{k-1}$ is currently matched to $y_{k-1}$ but it could be matched to $y_k$.
Since $y_k$ is unsaturated, we can match $x_{k-1}$ to $y_k$ and unsaturate $y_{k-1}$. Then, we can match $x_{k-2}$ to $y_{k-1}$ and unsaturate $y_{k-2}$ and so on.
Eventually, we can unsaturate $y_1$ and match $x_0$ to $y_1$.
Note that each $(x_i, y_i)$ pair in $M$ is replaced with $(x_{i-1}, y_i)$ for $i \in \{ 1, 2, \dots, k-1 \}$ and we added the new pair $(x_{k-1}, y_{k})$.

This is the main idea behind the algorithm. When a new node $v_L$  arrives, algorithm finds a shortest path from it to some unsaturated node and flips all the edges on that path.
This corresponds to making $(height(v_L) - 1)/2$ swaps.

\subsection{Analysis}
We will present the proof of the following theorem which gives an upper bound on the number of swaps over $n$ arrivals.

\begin{theorem} \label{btheo}
The total number of swaps done by the shortest augmenting path algorithm over $n$ arrivals is at most $2n$.
\end{theorem}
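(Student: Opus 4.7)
The plan is an amortized analysis via a height-based potential function, in the spirit of Hopcroft--Karp's classical analysis of shortest augmenting paths. Set
\[
\Phi(t) \;=\; \sum_{x \in L_t} h_t(x).
\]
A short preliminary observation is that $\Phi(t)$ is always finite: if some component of $L$-nodes could not reach an unsaturated $y$, every $R$-node in that component would be saturated, forcing $|L_{\text{closed}}| = 2K\cdot |R_{\text{closed}}|$, which contradicts the Hall-type bound $|R_{\text{closed}}| \ge |L_{\text{closed}}|/K$ from Lemma~\ref{halls}.

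The core is a monotonicity lemma: for every node $u$ already present just before arrival $t$, the post-augmentation height satisfies $h_t(u) \ge h_{t-1}(u)$. The proof is the classical exchange argument --- if some $u$ had a strictly shorter alternating path to an unsaturated node after the shortest-path augmentation $P$ chosen at time $t$, that path could be spliced with $P$ inside the pre-augmentation residual graph to yield an alternating path from $v_L^{(t)}$ strictly shorter than $P$, contradicting the minimality of $P$. Separately, the arrival event itself cannot shortcut existing heights, since only outgoing arcs out of $v_L^{(t)}$ are added at arrival and no prior node yet points to $v_L^{(t)}$.

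Monotonicity then gives the per-arrival inequality $\Phi(t) - \Phi(t-1) \ge h_t(v_L^{(t)})$. Since the shortest path has length $h_t(v_L^{(t)}) = 2k_t - 1$ and induces exactly $k_t - 1$ swaps, telescoping over arrivals yields
\[
\text{total swaps} \;=\; \sum_{t=1}^n \frac{h_t(v_L^{(t)}) - 1}{2} \;\le\; \frac{\Phi(n) - n}{2}.
\]
It remains to show $\Phi(n) \le 5n$, i.e., that the average final height of an $L$-node is bounded by a small constant. For this I would apply Lemma~\ref{halls} level by level to the BFS in $G^{res}$ from unsaturated $R$-nodes: the algorithm's load cap of $2K$ together with total load $n$ and $|R|\ge n/K$ guarantees an abundance of unsaturated $y$'s at level $0$, and at each higher odd level $2\ell+1$ the Hall-type inequality forces the number of $L$-nodes present to contract geometrically in $\ell$.

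The main obstacle will be this last bound. The subtlety is that the algorithm's $2K$-matching is not arbitrary but coexists with an offline $K$-matching in the same bipartite graph, and making the layered Hall argument quantitatively sharp enough to give the clean constant $2$ (rather than merely $O(1)$) is where the heart of the analysis lies.
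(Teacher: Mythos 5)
Your proposal is correct and follows essentially the same route as the paper: a sum-of-heights potential over $L$, monotonicity of heights proved by splicing a hypothetical shortcut into the shortest augmenting path to contradict its minimality (the paper's Lemmas \ref{dhref} and \ref{hnondec}), and a layered Hall/expansion argument showing that the number of $L$-nodes of height at least $2h+1$ is at most $|L|/2^h$ (the paper's Lemma \ref{largeheight}, built from Lemmas \ref{halls} and \ref{sizeofsat}). The final step you flag as the main obstacle is exactly that geometric-contraction lemma, from which $\Phi(n)=\sum_{x}height(x)\le 5n$ follows immediately, so the constant $2$ falls out with no extra sharpening needed.
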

To prove the theorem, authors make use of a series of lemmas. We first present those. For the part below, $d_G(u, v)$ denotes the
length of a shortest directed path from $u$ to $v$ in a directed graph $G$, $|P|$ denotes the length of some path $P$ and
$P_{a \rightarrow b}$ denotes the subpath $a \rightarrow b$ that lies on $P$. The symbol $\circ$ denotes the concatenation of two directed paths.

\begin{lemma} \label{dhref}
Upon arrival of some $a \in L$, assume the algorithm flips the shortest path $P$ that ends at some unsaturated node $b$.
If $u$ and $v$ are two nodes on $P$ such that $u$ appears before $v$, then we have
$$ d_{H^{res}}(u, v) \geq d_{G^{res}}(u, v) $$
where $G^{res}$ is the residual graph before the flip of $P$ and $H^{res}$ is residual the graph after the flip of $P$.
\end{lemma}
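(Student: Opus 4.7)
The plan is to use a standard BFS-level argument in the spirit of the Hopcroft--Karp analysis. Fix the starting vertex $a$ of the shortest path $P$ and define the level function
\[
\ell(x) = d_{G^{res}}(a, x).
\]
Since $P$ is a shortest path from $a$ in $G^{res}$, consecutive vertices on $P$ have $\ell$-values differing by exactly $1$. In particular, if $u$ appears before $v$ on $P$, then $P_{u \rightarrow v}$ is a shortest $u$-to-$v$ path in $G^{res}$ (otherwise concatenating $P_{a \rightarrow u}$, a shorter $u$-$v$ walk, and $P_{v \rightarrow b}$ would give a walk from $a$ to $b$ shorter than $P$, contradicting the choice of $P$). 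Hence
\[
d_{G^{res}}(u, v) = \ell(v) - \ell(u).
\]

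The key step is to show that the level function $\ell$ (computed in $G^{res}$) still gives an upper bound on distances one step at a time in $H^{res}$; namely, for every arc $(x, y)$ of $H^{res}$ we have $\ell(y) \leq \ell(x) + 1$. Arcs of $H^{res}$ split into two kinds. If $(x, y)$ is an arc of $G^{res}$ that was not on $P$, it survived the flip untouched, and $\ell(y) \leq \ell(x) + 1$ holds by the triangle inequality for BFS distances in $G^{res}$. If instead $(x, y)$ is a reversal of some arc $(y, x)$ of $P$, then $(y, x)$ was an edge of a shortest $a$-path, so $\ell(x) = \ell(y) + 1$, which gives $\ell(y) = \ell(x) - 1 \leq \ell(x) + 1$.

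With this one-step inequality in hand, any directed walk in $H^{res}$ from $u$ to $v$ of length $L$ satisfies $\ell(v) - \ell(u) \leq L$ by telescoping, so in particular
\[
d_{H^{res}}(u, v) \geq \ell(v) - \ell(u) = d_{G^{res}}(u, v),
\]
which is what we want.

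The only real subtlety is the second case in the one-step inequality (justifying that the flipped arcs do not let $\ell$ increase by more than $1$); once that is settled, everything else is a direct consequence of the BFS-shortest-path characterization of $P$. I would also take care to state upfront that $P_{u \rightarrow v}$ is itself a shortest path in $G^{res}$, since this is where the chosen $P$ being a \emph{shortest} augmenting path (rather than an arbitrary one) is used.
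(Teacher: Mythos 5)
Your proof is correct, but it takes a genuinely different route from the paper's. The paper works directly with a shortest $u \rightarrow v$ path $Q$ in $H^{res}$ and splits into two cases: if $Q$ shares no arc with $P^{rev}$, it splices $P_{a \rightarrow u} \circ Q \circ P_{v \rightarrow b}$ into $G^{res}$ and compares its length against $|P|$; if $Q$ does use reversed arcs, it asserts that the arcs of $Q \setminus P^{rev}$ concatenate into a $u \rightarrow v$ path already present in $G^{res}$. Your level-function argument replaces this case analysis on whole paths with a single local inequality $\ell(y) \leq \ell(x) + 1$ checked arc by arc (separately for surviving arcs and for reversals of arcs of $P$), followed by telescoping along an arbitrary $u \rightarrow v$ walk in $H^{res}$. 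This buys two things. First, it sidesteps the delicate second case of the paper's proof, where one must justify that deleting the shared reversed arcs from $Q$ really leaves a $u \rightarrow v$ path of $G^{res}$ --- the least transparent step of the paper's argument. Second, it proves something slightly stronger: $d_{H^{res}}(a, x) \geq d_{G^{res}}(a, x) = \ell(x)$ for every $x$ reachable from $a$, i.e.\ monotonicity of all distances from $a$, which is essentially the phenomenon that Lemma \ref{hnondec} is after as well. The paper's first case is marginally more elementary in that it introduces no auxiliary function, but your proof is the cleaner and more robust of the two. Two small points you should make explicit when writing it up: (i) the telescoping starts from $\ell(u)$, which is finite because $u$ lies on $P$, and the one-step inequality then propagates finiteness along the walk, so vertices unreachable from $a$ in $G^{res}$ cause no trouble; and (ii) $H^{res}$ differs from $G^{res}$ only by the reversal of the arcs of $P$ (the new arcs created by the arrival of $a$ are already in $G^{res}$ as defined in the lemma), so your two kinds of arcs do exhaust all arcs of $H^{res}$.
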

\begin{proof}
First note that $H^{res}$ is just $G^{res}$ with the edges on $P$ flipped. This reversed path is denoted by $P^{rev}$ and
we write $P$ as $$ P = P_{a \rightarrow u} \circ P_{u \rightarrow v} \circ P_{v \rightarrow b} .$$
Let $Q$ be a shortest $u \rightarrow v$ path in $H^{res}$. Then there are two cases. Either $Q$ do not share any edge with $P^{rev}$ or it shares at least an edge with $P^{rev}$.

If $Q$ do not share any edge with $P^{rev}$, then we can write a path $P'$ from $a$ to $b$ in $G^{res}$ as
$$ P' = P_{a \rightarrow u} \circ Q \circ P_{v \rightarrow b} .$$
Since $P$ is a shortest $a \rightarrow b$ path in $G^{res}$ by our assumption, we have $|P'| \geq |P|$.
Moreover, $|P_{u \rightarrow v}| = d_{G^{res}}(u, v)$ as $P$ is a shortest path. \\
Thus, $ |Q| = d_{H^{res}}(u, v) \geq  d_{G^{res}}(u, v) = |P_{u \rightarrow v}|  .$

For the second case, we observe that if we concatenate the edges in $Q \setminus P^{rev}$, we get a path that starts at $u$ and ends at $v$.
Since this path is also present in $G^{res}$, its length is at least $d_{G^{res}}(u, v)$.
This implies $|Q| = d_{H^{res}}(u, v)$ is strictly greater than $d_{G^{res}}(u, v)$ in this case.

Hence, in general we have $d_{H^{res}}(u, v) \geq d_{G^{res}}(u, v).$
\end{proof}

\begin{lemma} \label{hnondec}
For any node x, height(x) is non-decreasing over arrivals.
\end{lemma}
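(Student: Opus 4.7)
The plan is to decompose each arrival step into two sub-steps and verify each separately. When $v_L$ arrives, the algorithm first adds $v_L$ together with its outgoing residual arcs $v_L \to v_R$ for each $v_R \in N(v_L)$, and then performs the flip along a shortest augmenting path $P$ to an unsaturated node (the length-$1$ case of an immediate match is included). The first sub-step cannot decrease $\mathrm{height}(x)$ for any pre-existing $x$: the new arcs all have tail $v_L$, nothing yet points into $v_L$, and the set of unsaturated nodes in $R$ is untouched, so no $G^{res}$-walk starting at an old node can exploit the new arcs.

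For the flip sub-step, write the path as $P = p_0 \to p_1 \to \cdots \to p_\ell$ with $p_0 = v_L$ and $p_\ell = b$, and let $G^{res}$, $H^{res}$ denote the residual graphs immediately before and after the flip. Two structural facts are central. First, the set $U'$ of unsaturated nodes in $H^{res}$ is contained in the set $U$ of unsaturated nodes in $G^{res}$: a flip adds one match (at $b$) and therefore can only turn an unsaturated node saturated. Second, for every $p_i$ on $P$ one has $\mathrm{height}_G(p_i) = \ell - i$: the inequality $\le$ follows by continuing along $P$ from $p_i$ to $b \in U$, and $\ge$ follows from the minimality of $P$, since a strictly shorter $p_i$-to-$U$ walk prepended with $p_0 \to \cdots \to p_i$ would yield a $v_L$-to-$U$ walk of length below $\ell$.

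With these two facts in place I would prove $\mathrm{height}_H(x) \ge \mathrm{height}_G(x)$ by induction on $k := \mathrm{height}_H(x)$. The base case $k = 0$ gives $x \in U' \subseteq U$, so $\mathrm{height}_G(x) = 0$. For the inductive step, fix a shortest $x$-to-$U'$ walk in $H^{res}$ and examine its first arc $(x, w)$, which satisfies $\mathrm{height}_H(w) \le k - 1$. If $(x, w) \in G^{res}$, the inductive hypothesis gives $\mathrm{height}_G(x) \le \mathrm{height}_G(w) + 1 \le \mathrm{height}_H(w) + 1 \le k$. Otherwise $(x, w)$ must be a flipped arc, forcing $x = p_j$ and $w = p_{j-1}$ for some $j \ge 1$; combining the BFS identity with the inductive hypothesis then yields $\ell - j + 1 = \mathrm{height}_G(w) \le \mathrm{height}_H(w) \le k - 1$, hence $\mathrm{height}_G(x) = \ell - j \le k - 2 < k$. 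The subtle step is exactly this last case: Lemma~\ref{dhref} only controls distances between pairs of nodes that both lie on $P$, whereas here the shortest $H^{res}$-walk may wander far from $P$, and the only way I see to close the gap is by focusing on the first arc, which reduces the problem either to the inductive hypothesis or to checking that the one flipped arc cannot create a shortcut — the BFS identity $\mathrm{height}_G(p_i) = \ell - i$ is precisely what rules such a shortcut out.
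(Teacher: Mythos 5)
Your proof is correct, and it takes a genuinely different route from the paper's. The paper fixes a shortest $x\rightarrow y$ path $T$ in $H^{res}$ ($y$ unsaturated) and performs surgery on it: it splits into cases according to whether $T$ meets $P^{rev}$, identifies the first and last nodes $u,v$ of $T\cap P$, and handles the two possible orders of $u$ and $v$ along $P$ via Lemma~\ref{dhref} and explicit path concatenations. You instead induct on $\mathrm{height}_{H}(x)$ and only ever look at the \emph{first} arc of a shortest path in $H^{res}$, which reduces everything to two local facts: the unsaturated set can only shrink (only $b$ changes matched degree in $R$), and the level identity $\mathrm{height}_G(p_i)=\ell-i$ along $P$, which follows from the minimality of $P$ by prepending the prefix $p_0\rightarrow\cdots\rightarrow p_i$. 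Your Case B correctly shows that a reversed arc $(p_j,p_{j-1})$ cannot act as a shortcut, since the inductive hypothesis forces $\ell-j+1\le k-1$ and hence $\mathrm{height}_G(p_j)=\ell-j<k$. What your approach buys is robustness: it makes no assumption about how the new shortest path interacts globally with $P^{rev}$ (the point you rightly flag as the delicate one in any path-surgery argument), and it does not need Lemma~\ref{dhref} at all, only its trivial instance for consecutive nodes of $P$. What the paper's approach buys is that Lemma~\ref{dhref} is stated and reused as a standalone distance-monotonicity fact; your argument could recover it, but as written it proves only the height statement. Both arguments are sound; yours is the cleaner induction.
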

\begin{proof}
We use the definitions from the previous lemma. Again, assume some $a \in L$ arrives and the algorithm flips a shortest $a \rightarrow b$ path $P$.
If $height(x) = h$ in $G^{res}$, then the lemma says that for any unsaturated node $y$ in $H^{res}$, the length of a shortest $x \rightarrow y$ path in $H^{res}$ is at least of length $h$.
Let $T$ be a shortest $x \rightarrow y$ path in $H^{res}$. We give a proof by showing $|T| \geq h$ in all possible cases.

First assume that $T$ does not share any edge with $P^{rev}$. If $y \neq b$, then this means flipping $P$ does not affect $T$. So, $T$ is also a shortest $x \rightarrow y$ path in $G^{res}$
and consequently, its length must be $h$. If $y = b$, then it is possible for $y$ to become saturated after the flip. If that happens, the shortest path from $x$ to some unsaturated node can not
possibly decrease as this would imply there exists a node $b'$ such that $x$ is closer to $b'$ than $b$ in $G^{res}$.
In short, $|T| \geq h$ if $T$ does not share any edge with $P^{rev}$.

Now assume $T$ shares at least an edge with $P^{rev}$. Let $u$ and $v$ be the first and the last node that appears in $T \cap P$ as we traverse the path $x \rightarrow y$.
First assume that $u$ appears before $v$ in $P$.
So if $Q$ denotes a shortest $u \rightarrow v$ path then, $$ T = T_{x \rightarrow u} \circ Q \circ T_{v \rightarrow y} .$$
Now we observe that $$ T' = T_{x \rightarrow u} \circ P_{u \rightarrow v} \circ T_{v \rightarrow y}$$  is a $x \rightarrow y$
path in $G^{res}$ and hence, $|T'| \geq h$. \\
The length of $T$ is given by
$$ |T| = |T_{x \rightarrow u}| + d_{H^{res}}(u, v) + |T_{v \rightarrow y}|$$
and the length of $T'$ is given by
$$ |T'| = |T_{x \rightarrow u}| + d_{G^{res}}(u, v) + |T_{v \rightarrow y}| .$$
Due to Lemma \ref{dhref}, we must have $|T| \geq |T'| \geq h$.

Now let $v$ appear before $u$ in $P$. First, we observe that
$ P_{a \rightarrow v} \circ T_{v \rightarrow y} $ is a path from $a$ to an unsaturated node. Hence, we must have
$$ |P_{a \rightarrow v}| + |T_{v \rightarrow y}| \geq |P| .$$
If we write $P$ as $ P = P_{a \rightarrow v} \circ P_{v \rightarrow u} \circ P_{u \rightarrow b} $, we can see that this inequality implies
$$ |T_{v \rightarrow y}| \geq |P_{v \rightarrow u}| + |P_{u \rightarrow b}| .$$
We further observe that $T_{x \rightarrow u} \circ P_{u \rightarrow b}$ is a path
from $x$ to an unsaturated node. Hence,
$$ |T_{x \rightarrow u}| + |P_{u \rightarrow b}| \geq h .$$
Finally, by summing the last two inequality we see that
$$ |T| \geq |T_{x \rightarrow u}| + |T_{v \rightarrow y}|  \geq h .$$
\end{proof}

\begin{lemma} \label{largeheight}
The number of nodes in $L$ with height at least $2h + 1$ is at most $|L|/2^h.$
\end{lemma}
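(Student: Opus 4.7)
The plan is to prove the statement by induction on $h$. Define $S_h = \{x \in L : \text{height}(x) \geq 2h+1\}$. The base case $h = 0$ holds trivially since $|S_0| \leq |L|$. For the inductive step, assuming $|S_{h-1}| \leq |L|/2^{h-1}$, it suffices to show $|S_h| \leq |S_{h-1}|/2$, which immediately yields $|S_h| \leq |L|/2^h$.

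The main idea is to use Lemma \ref{halls} to push $S_h \subseteq L$ outward to a set of high-height $R$-nodes, and then use Lemma \ref{sizeofsat} to pull back into $L$ with a multiplicative factor of $2K$. First I would argue that for each $x \in S_h$, every $y \in N(x)$ (in the bipartite graph) satisfies $\text{height}(y) \geq 2h$. For a non-matched neighbor this is immediate from the residual arc $x \to y$, which gives $\text{height}(y) \geq \text{height}(x) - 1 \geq 2h$. For the matched partner $y^*_x$ the argument must instead go through the moment $(x, y^*_x)$ entered $M$: the shortest-augmenting-path structure at that instant forces $y^*_x$'s height to be at least $2h$ then, and Lemma \ref{hnondec} carries this lower bound forward to the current state. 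This establishes $N(S_h) \subseteq T_h$, where $T_h := \{y \in R : \text{height}(y) \geq 2h\}$.

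Applying Lemma \ref{halls} now gives $|T_h| \geq |N(S_h)| \geq |S_h|/K$. Since $h \geq 1$, each $y \in T_h$ has strictly positive height and is therefore saturated, so Lemma \ref{sizeofsat} yields $|N_M(T_h)| = 2K \cdot |T_h| \geq 2|S_h|$. Finally, every $x' \in N_M(T_h)$ is matched to some $y \in T_h$, and the residual arc $y \to x'$ forces $\text{height}(x') \geq \text{height}(y) - 1 \geq 2h - 1$, placing $x' \in S_{h-1}$. Hence $|S_{h-1}| \geq |N_M(T_h)| \geq 2|S_h|$, closing the induction.

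The hard part is the matched-partner step above: the residual arc points $y^*_x \to x$ rather than $x \to y^*_x$, so it does not directly yield a height bound on $y^*_x$ in terms of $\text{height}(x)$. Handling it cleanly requires a temporal argument that identifies the swap (or arrival) at which $(x, y^*_x)$ joined $M$, invokes the shortest-path property of the augmenting path used there to conclude $y^*_x$ had height at least $2h$ at that moment, and then appeals to the monotonicity guaranteed by Lemma \ref{hnondec} to transfer the bound to the present.
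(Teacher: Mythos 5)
Your proposal follows the same route as the paper: the induction step $|S_h| \leq |S_{h-1}|/2$ is exactly the paper's chained construction $S_0 \to S_1 = N(S_0) \to S_2 = N_M(S_1) \to \cdots$, using Lemma~\ref{halls} to lose a factor $K$ going into $R$ and Lemma~\ref{sizeofsat} to gain a factor $2K$ coming back, iterated $h$ times. There is no methodological difference in the counting.

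The substantive issue is the ``hard part'' you isolate, and there you have correctly identified a step that the paper's own proof glosses over: the paper justifies ``all nodes of $S_1$ are saturated'' by saying an unsaturated neighbor would give a node of $S_0$ height $1$, but that reasoning only applies to a neighbor $y$ with $(x,y)\notin M$, where the residual arc points $x \to y$; for the matched partner the arc points the other way and no lower bound on $height(y)$ follows from $height(x)$. However, your proposed repair does not close this gap. At the moment $(x, y^*_x)$ enters $M$, the shortest-augmenting-path structure gives $height(y^*_x) = height(x) - 1$ \emph{with both heights measured at that moment}, and at that moment $height(x)$ may be as small as $1$ (e.g.\ $x$ arrives and is matched directly to an unsaturated server). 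Lemma~\ref{hnondec} then only propagates the bound $height(y^*_x) \geq height_{\text{then}}(x) - 1$ forward, whereas you need $height(y^*_x) \geq height_{\text{now}}(x) - 1 = 2h$; since $x$'s height may have grown arbitrarily after the matching was made while $y^*_x$'s height is only known to be nondecreasing, the two statements do not connect. Concretely, a server $y$ matched to $x$ satisfies only $height(y) \leq height(x) + 1$ via the arc $y \to x$, and $height(y)$ can be small because a \emph{different} client matched to $y$ still has a short augmenting path. So the inclusion $N(S_h) \subseteq T_h$ remains unproved for matching edges; restricting to non-matching neighbors instead would break the Hall's-theorem count, so an additional invariant of the algorithm is needed. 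As it stands the proposal reproduces the paper's counting correctly but leaves its one genuinely delicate step (which the paper also elides) unresolved.
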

\begin{proof}
The claim holds for $h = 0$. We analyze the case $h \geq 1$. Let $S_0 \subseteq L$ denote
the nodes with height at least $2h + 1 \geq 3$. Let $S_1$ be $N(S_0)$. Then, by Lemma \ref{halls} we have $|S_1| \geq |S_0|/K$.
Now observe that all the nodes in $S_1$ must be saturated. If they were not, some of the nodes in $S_0$ would have height 1
which would contradict with our assumption. Let $S_2$ be the set of nodes that are matched with nodes in $S_1$. Note that $S_2$ is
the set of nodes with height at least $2h-1$.
By Lemma \ref{sizeofsat}, we have $|S_2| = 2K \cdot |S_1| \geq 2|S_0|.$ By defining the sets $S_3, S_4, \dots, S_{2h}$ similary and repeating the argument up to
$S_{2h}$, we see that $2^h \cdot |S_0| \leq |S_{2h}|$. Since $|S_{2h}| \leq |L|$ as $S_{2h} \subseteq L$, we conclude that $2^h \cdot |S_0| \leq |L|$.
\end{proof}

\begin{lemma} \label{sumheight}
  Assume the algorithm flips the path $P$ upon arrival of some node $a \in L$.
  Let $height'(\cdot)$ denote the height function after the flip and let $height(\cdot)$ denote the height function before the flip.
  If the match of $y$ is changed from $x$ to $x'$ after the flip, then $height'(x') \geq height(x) + 2.$
\end{lemma}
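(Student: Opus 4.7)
The plan is to exploit the shortest-path structure of $P$ together with the monotonicity of heights established in Lemma \ref{hnondec}. Write $P$ in the canonical form
\[ a = x_0, y_1, x_1, y_2, \dots, x_{k-1}, y_k = b, \]
so that the flip turns each pair $(x_j, y_j) \in M$ into $(x_{j-1}, y_j)$ for $j \in \{1, \dots, k-1\}$ and adds the new pair $(x_{k-1}, y_k)$. Whenever the match of some $y$ changes, we must have $y = y_i$ for some $i \in \{1, \dots, k-1\}$ with $x = x_i$ and $x' = x_{i-1}$, so the claim reduces to showing $height'(x_{i-1}) \geq height(x_i) + 2$.

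The central step is to pin down the exact pre-flip height of every $x_j$ on the path. The subpath $P_{x_j \rightarrow y_k}$ has length $2(k-j) - 1$ and ends at an unsaturated node, giving $height(x_j) \leq 2(k - j) - 1$. For the matching lower bound I would argue by contradiction: a strictly shorter path from $x_j$ to any unsaturated node would, when prepended with $P_{a \rightarrow x_j}$ (of length $2j$), produce an $a$-to-unsaturated walk of length strictly less than $|P| = 2k - 1$, contradicting the fact that $P$ itself is a shortest such path from $a$. This yields $height(x_j) = 2(k - j) - 1$ for every $j \in \{0, 1, \dots, k-1\}$, and in particular $height(x_{i-1}) - height(x_i) = 2$ exactly.

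To close the argument I would apply Lemma \ref{hnondec} at the node $x_{i-1}$, which guarantees $height'(x_{i-1}) \geq height(x_{i-1}) = height(x_i) + 2$. The main conceptual obstacle is the lower bound on $height(x_j)$; once that shortcut-pasting argument is in place, the rest is just bookkeeping plus a single invocation of Lemma \ref{hnondec}. No further case analysis is required, since $x'$ being exactly one position earlier than $x$ on $P$ forces the height gap to be precisely $2$ rather than something that needs to be estimated.
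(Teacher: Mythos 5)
Your proof is correct and follows essentially the same route as the paper: pin down $height(x_j) = 2(k-j)-1$ along the shortest path, observe that the new match sits exactly one $L$-position earlier so its pre-flip height is exactly $2$ larger, and finish with the monotonicity of heights from Lemma \ref{hnondec}. The only difference is that you justify the exact height formula via the shortcut-pasting contradiction, whereas the paper asserts it in one line; your version is the more careful one.
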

\begin{proof}
  We again write the sequence of nodes in $P$ from $a$ to some unsaturated node $b$ as,
  $$ a = x_0, y_1, x_1, y_2, \dots, x_{k-1}, y_{k} = b $$
  where every $x_i \in L$ and every $y_i \in L$.
  We observe that $height(y_k) = 0$ and $height(x_i) = 2(k - i) -1$ as heights decrease by $2$ for each $x_i$ when traversing from $a$ to $b$.
  After flipping $P$, $y_i$ gets matched with $x_i$ and it was previously matched with $x_{i+1}$.
  Using the fact that $height(\cdot)$ is non-decreasing (Lemma \ref{hnondec}), we see
  \begin{align*}
  height'(x_i) &\geq height(x_i) \\
  & = height(x_{i+1}) + 2
  \end{align*}
\end{proof}

The last lemma shows that each swap increases the sum of heights at least by $2$. Further, we have a bound on the number nodes with height $2h + 1$ for any $h \geq 0.$
Authors make use of these two observations to prove Theorem \ref{btheo}. We restate it below.
\begin{proof}[Proof of Theorem \ref{btheo}]
Define the potential as
$$ \Phi = \sum_{x \in L}\frac{height(x) -1}{2} .$$
Clearly, it starts at $0$ as the first arriving nodes height is $1$.
A key observation about this potential is that each swap causes it to increase by at least 1 due to Lemma \ref{sumheight}. To see this more clearly, notice that
instead of summing over nodes in $L$, we could have summed over nodes in $R$ and then sum over their matches in $L$. We know from the Lemma \ref{sumheight} that
each swap causes a node in $R$ to match with a node in $L$ whose height is at least $2$ greater. Due to division by $2$, this will increase
potential at least by $1$. Since the potential starts at $0$ and since each swap causes it to increase at least by $1$, the value of this potential after $n$
arrivals is an upper bound on the number of swaps after $n$ arrivals. We compute the potential after $n$ arrivals and see that
\begin{align*}
\Phi &= \sum_{x \in L}\frac{height(x) -1}{2} \\
&  = \sum_{h=0}^{\infty} h \cdot (\text{number of nodes in L with height 2h + 1})  \\
& \leq |L| \cdot \sum_{h=0}^{\infty} \frac{h}{2^h}  \\
&  = 2|L| \\
&  = 2n,
\end{align*}
where the third line follows from Lemma \ref{largeheight}.
\end{proof}

As before, we note the tradeoff between the number of recourse operations and the constraint. For a $CK$ matching such that $C \geq 2$,
the result of Lemma \ref{largeheight} can be generalized as $|L|/C^h$ as the result of Lemma \ref{sizeofsat} would have changed to $|L| \cdot CK$.
Then, we would have $$ |L| \cdot \sum_{h=0}^{\infty} \frac{h}{C^h} = n \cdot \frac{C}{(C-1)^2}  $$ as an upper bound on the number of swaps
after $n$ arrivals.

\section{Conclusion}
Throughout the paper, the common theme was improving the competitive ratio by recourse. We presented
two problems in which no algorithm can do better than logarithmic competitive ratio when there is no recourse.
Moreover, we have seen even if we allow recourse to improve the competitive ratio, achieving the optimal offline solution can be costly.
In contrast to that, we observed approximating the optimal solution within a constant factor costs much less, e.g. the
algorithms we presented just needed a linear number of recourse operations in terms of arrivals.
We also did a brief analysis in an adversary against the algorithm setting for the shortest path algorithm and showed that it can be helpful to establish bounds.
In general, we have observed recourse can be a good approximation tool.
\newpage

\bibliography{Bibliography}
\bibliographystyle{ieeetr}
\end{document}